\let\classAND\AND
\let\AND\relax
\let\AND\classAND
\newtheorem{theorem}{Theorem}
\newtheorem{lemma}[theorem]{Lemma}
\tikzstyle{process} =[rectangle, minimum width=\linewidth, minimum height=1cm, text centered, text width=8cm, draw=black, fill=grey!5]
\tikzstyle{process2} =[rectangle, minimum width=\linewidth, minimum height=1cm, text centered, text width=8cm, draw=black, fill=grey!5]
\tikzstyle{process3} =[rectangle, minimum width=\linewidth, minimum height=1cm, text centered, text width=8.4cm, draw=black, fill=grey!5]
\begin{document}

\begin{frontmatter}

\title{Reinforcement Learning-based Control of Nonlinear Systems using Carleman Approximation: Structured and Unstructured Designs\thanksref{footnoteinfo}} 
                                               
\thanks[footnoteinfo]{This work was partially supported by the US National Science Foundation under grant {\color{black}ECCS 1711004.}}

\author[ncsu]{Jishnudeep Kar}\ead{jkar@ncsu.edu},    
\author[ok]{He Bai}\ead{he.bai@okstate.edu},               
\author[ncsu]{Aranya Chakrabortty}\ead{achakra2@ncsu.edu}  

\address[ncsu]{North Carolina State University, USA}  
\address[ok]{Oklahoma State University, USA}      
          
\begin{keyword}                           
Carleman Linearization, Data-driven control, Optimal Control, Reinforcement Learning
\end{keyword}                             %

\begin{abstract}                          
We develop data-driven reinforcement learning (RL) control designs for input-affine nonlinear systems. We use Carleman linearization to express the state-space representation of the nonlinear dynamical model in the Carleman space, and develop a real-time algorithm that can learn nonlinear state-feedback controllers using state and input measurements in the infinite-dimensional Carleman space. Thereafter, we study the practicality of having a finite-order truncation of the control signal, followed by its closed-loop stability analysis. Finally, we develop two additional designs that can learn structured as well as sparse representations of the RL-based nonlinear controller, and provide theoretical conditions for ensuring their closed-loop stability. We present numerical examples to show how our proposed method generates closed-loop responses that are close to the optimal performance of the nonlinear plant. We also compare our designs to other data-driven nonlinear RL control methods such as those based on neural networks, and illustrate their relative advantages and drawbacks. \end{abstract}

\end{frontmatter}
\section{INTRODUCTION}
In recent years, reinforcement learning (RL), proposed originally by Sutton and Barto as an approach for stochastic policy optimization in \cite{sutton}, has emerged as an effective and popular tool for adaptive optimal control of dynamical systems with unknown state-space models \cite{lewis,jiang,ben}. The method has specifically received attention for the design of optimal state-feedback based linear quadratic regulators (LQR) in the absence of a precise knowledge of the plant model, using only measurements of the states and the control inputs over time \cite{ben,powell,lewis,jiang}. A technique known as \textit{policy iteration} has been proposed in these seminal papers, augmented by several other recent advanced optimization methods \cite{fazel}, to solve the model-free LQR design problem using RL. 

{\color{black}The extension of the policy iteration approach or of other alternative techniques for applying RL to nonlinear dynamic models, however, has only been sparsely reported due to the lack of theoretical guarantees on stability and convergence.} A few {\it approximation}-based RL algorithms have been proposed for nonlinear systems, but they often come with stringent limits on scalability and learning times. RL controllers for small-signal linearized models of nonlinear systems have been derived in \cite{sayak} based on singular perturbation approximations, but the controller is only valid under {\it small} perturbations to the nonlinear dynamics, and fails to stabilize the system under any generic large-signal perturbation. Alternative designs have been reported in \cite{persis} on RL for specific nonlinear structures such as polynomials, and in \cite{nn,nn2,warren1}, where neural networks (NN) are used to approximate the nonlinear functions in the Hamilton–Jacobi–Bellman equation for a general nonlinear plant using appropriate activation functions and training weights. These methods guarantee a stabilizing RL controller for a large number of neurons, but as the system size grows, the complexity of the NN increases, thereby leading to longer learning times that may not be tenable for real-time control.  

In this paper, we develop a new theory for policy iteration based RL control for nonlinear systems based on an infinite-series expansion. Various asymptotic approaches for infinite-series representation and corresponding linearization of nonlinear systems have been used in the literature, three most common examples being Carleman linearization \cite{carleman}, Koopman linearization \cite{koopman}, and Lie algebra \cite{lie}. While these methods find relative merits and demerits depending on their use and application, our results in this paper are based on the observation that Carleman linearization is the best among the three for policy iteration based RL due to its bilinear functional form. We first express the nonlinear plant as an equivalent Carleman bilinear system in an infinite-dimensional space. We derive the Lyapunov equation that a quadratic controller must satisfy for optimality, and thereafter develop both on-policy and off-policy algorithms for learning this controller such that it converges to the optimal controller in the infinite-dimensional space starting from a stabilizing control.  

Since the infinite-dimensional control cannot be implemented in a real application, we next consider a truncated approximation of the Carleman bilinear system, and rederive the RL control design for this truncated model. The truncation order can be chosen to be any integer by the designer, depending on the {\it severity} of the nonlinearity of the plant. We mathematically show how the approximation error impacts the closed-loop performance and the stability margins. We list the steps for online implementation of the truncated controller using both on-policy and off-policy methods. Subsequently, we extend the Carleman based RL theory to learning structured and sparse controllers. The structured solution is presented based on an iterative solution of a generalized Riccati like equation in the Carleman space, whereas the sparse controller is derived using an alternating direction method of multipliers (ADMM) based method in tandem with the Carleman policy iterations. Results are validated using two numerical examples, where we show how our proposed method provides solutions close to the  optimal control resulting from the model-based Hamilton-Jacobi-Bellman equation. We also compare our controller to alternative data-driven control using deep learning, and show its relative advantage in terms of faster learning times.\vspace{-0.15cm}

In addition to our previous paper in \cite{ourACC}, the main contributions of this paper are as follows.\vspace{-0.15cm}
\begin{enumerate}
    \item Develop on-policy and off-policy RL algorithms for optimal control of input-affine nonlinear systems using Carleman linearization
    \item Derive conditions for stability and convergence of the designs with {\it truncated} Carleman realization.
    \item Develop two numerical algorithms, one for structured control and one for sparse control, for the Carleman RL based truncated controller using a generalized Riccati equation in the Carleman space. \vspace{-0.15cm}
\end{enumerate}


The paper is organized as follows. Section II formulates the control design. Section III recapitulates the Carleman linearization approach, followed by Section IV that develops  model-free RL design for the Carleman model.  Section V presents the control policy for $\mathbf{N}^{th}$-order truncation of the Carleman model. Section VI summarizes the  online implementation followed by simulation results in Section VII. Section VIII concludes the paper. \vspace{-0.3cm}

\section{Problem Formulation}
Consider a nonlinear system of the form\vspace{-0.15cm}
\begin{equation}
    \dot{z}(t) = f(z(t)) + g(z(t)){w(t)}, \;\;\; {z}(0)={z}_0
    \label{eqn:nonlinear}\vspace{-0.15cm}
\end{equation}
 where, $z \in \mathbb{R}^n$ is the state, $f(z): \mathbb{R}^n \longrightarrow \mathbb{R}^n$ is an unknown nonlinear but {\color{black} analytic function}, $n$ is the order of the plant, which is assumed to be known, $g(z) \in \mathbb{R}^{n\times k}$ is a known input matrix, and ${w} \in \mathbb R^k$ is the control input. Let $(z^{*},w^{*})$ be the steady state values of $(z,\,w)$  such that $\dot{z}=0$. Define $x:= z - z^{*}$ and $u = w-w^{*}$ as the deviations of the state and the input from their respective equilibrium. The goal is to design a controller $u(t)$ that drives  $x(t)$ to zero while minimizing\vspace{-0.15cm}
\begin{equation}
\min _u J = \dfrac{1}{2}  \int_0^\infty (x^T Q_1 x + u^T R u) dt \;,
    \label{eqn:objfn}
\end{equation}
where, $Q_1 \in \mathbb{R}^{n \times n} \geq 0$ and $R \in \mathbb{R}^{k \times k}>0$ are user-defined weight matrices, to improve the closed-loop transient response. Both $x$ and $w$ are assumed to be measured, meaning that $x_0$ and $w_0$ are available from measurements. One way to solve this problem is by approximating \eqref{eqn:nonlinear} by a linear time-invariant system 
\begin{equation}
    \dot{x}(t) = A x(t) + B_1 u(t), \;\; x(0)= z(0)-z_0,\, u(0)=0
  \vspace{-0.1cm}  \label{eqn:lti}
\end{equation}
where $A \in \mathbb{R}^{n \times n}$ is the Jacobian matrix. If $A$ is known, then the optimal control input $u$ minimizing \eqref{eqn:objfn} is given by $u=-Kx$, where $K \in \mathbb{R}^{1 \times n} = R^{-1}B_1^TP$ is the feedback gain and $P \in \mathbb{R}^{n \times n}$ is the solution of the algebraic Riccati equation
\begin{equation}
    A^TP + PA - PB_1R^{-1}B_1^TP + Q_1 = 0.\vspace{-0.1cm}
    \label{eqn:are}
\end{equation}
The above Riccati equation can also be solved in an iterative manner \cite{lewis} as follows.
\begin{multline}
  \text{solve for $P_k$ : } A^TP_{k} + P_kA - P_{k-1}BR^{-1}B^TP_{k-1} = 0 \\
  \text{update $k$ : } k \hookleftarrow k+1.
\end{multline}

This design, however, is valid only for a small-signal approximation \eqref{eqn:lti}, and will not hold if this approximation does not apply on $f(\cdot)$. Therefore, our approach for the rest of this paper is to express the model~\eqref{eqn:nonlinear} as an infinite-dimensional system, provide the designer with the freedom of truncating the order to any desired finite value depending on the {\it severity} of the nonlinearity $f(\cdot)$ (an upper bound for which may be determined by the designer from prior open-loop experiments on the plant), and finally design a model-free optimal controller \eqref{eqn:objfn} that stabilizes the truncated closed-loop model. We use Carleman's infinite-dimensional approximation technique for the first step, as described next.

\section{Carleman Linearization}

\subsection{State-space representation in Carleman space}
Given $f(z_0)=0$, the nonlinear system \eqref{eqn:nonlinear} can be  written as an infinite power-series expansion as
\cite{motee} 
\begin{equation}
    \dot{x} = \sum_{k=1}^{\infty} A_{1k} x^{[k]}  +  \sum_{i=1}^k g_i(x) u_i,
    \label{eqn:power}
\end{equation}
where $x^{[k]} \in \mathbb{R}^{n_k} = \underbrace{x \otimes x \otimes \hdots x}_{k \text{ times}}$, where $n_k$ are the unique elements in the Kronecker product, $A_{1k} \in \mathbb{R}^{n \times n_k}$ can be computed as in \cite{motee}, and $g_i(x)$ is the $i^{th}$ column of the input matrix $g(x)$ corresponding to input $u_i$. Further, expanding $g_i(x)$ as $g_i(x) = (B_{0i} + \sum_{l=1}^{\infty}B_{li}x^{[i]})$, we have
\begin{equation}
     \dot{x} = \sum_{k=1}^{\infty} A_{1k} x^{[k]}  +  \sum_{i=1}^k  (B_{0i} + \sum_{l=1}^{\infty}B_{li}x^{[i]}) u_i,
\end{equation}
Define $\psi_{\mathbf{N}} = col(x \;,\; x^{[2]} \;,\; \hdots \;, x^{[\mathbf{N}]}), \,\,  N \in \mathbb{Z}^{+}$, where $col(.)$ is a column vector by stacking the embraced vectors. Let $\psi:= \psi_N$ when $\mathbf{N} \rightarrow \infty$. Following \cite{motee}, the model \eqref{eqn:power} can then be expressed in an infinite-dimensional state-space representation as \footnotesize
\begin{equation}
    \dot{\mathbf{\psi}} = \mathcal{A} \psi + \sum_{i=1}^k \underbrace{\Big(\begin{bmatrix}
    B_{0i} \\
    \vline \\
    0 \\
    \vline
    \end{bmatrix}+ \begin{bmatrix}
    B_{1i} & B_{2i} & B_{3i} & \hdots \\
    B_{(2,0)i} & B_{(2,1)i} & B_{(2,2)i} & \hdots \\
    0 & B_{(3,0)i} & B_{(3,1)i} & \hdots\\
    \vdots & \vdots & \vdots & \vdots
    \end{bmatrix} \psi\Big)}_{B_{i(\psi)}} u_i
    \label{eqn:infcarl}
\end{equation}
\begin{equation}
    \mathcal{A } = \begin{bmatrix}
    A_{11} & A_{12} & A_{13} & \hdots \\
    0 & A_{22} & A_{23} & \hdots \\
    0 & 0 & A_{33} & \hdots \\
    \vdots & \vdots & \vdots & \ddots
    \end{bmatrix} , 
    \label{eqn:infAinfB}
\end{equation}\normalsize
where $A_{ij} \in \mathbb{R}^{n_i \times n_j}$ and $B_{i0}\in \mathbb{R}^{n_i \times n_{i-1}}$ are given by
\begin{subequations}
\begin{align}
    A_{ij} = \sum_{k=1}^i (\mathbf{I} \otimes \hdots \otimes \underbrace{A_{1j}}_{k^{th} \text{ position}} \otimes \hdots \mathbf{I}),\\
    B_{i0} = \sum_{k=1}^i (\mathbf{I} \otimes \hdots \otimes \underbrace{B_{1}}_{k^{th} \text{ position}} \otimes \hdots \mathbf{I}),
\end{align}
\label{eqn:AijBijset}
\end{subequations}
with $\mathbf{I}$ being the identity matrix of an appropriate dimension. Defining $B_{\psi} = [B_{1(\psi)} \,\, B_{2(\psi)} \, \hdots ]$, we can write \eqref{eqn:infcarl} in a simplified form as 
\begin{equation}
      \dot{\mathbf{\psi}} = \mathcal{A} \psi + B_{\psi} u.
      \label{eqn:simplinfcarl}
\end{equation}

Based on the model \eqref{eqn:infcarl}-\eqref{eqn:AijBijset}, we next derive the Lyapunov equation that an optimal control gain $K$ in the Carleman space must satisfy to stabilize the closed-loop model. 

\subsection{Derivation of Lyapunov equation}
We rewrite the control objective in~\eqref{eqn:objfn} as
\begin{equation}
    J = \dfrac{1}{2} \int_0^\infty (\psi^T Q \psi + u^T R u) \;dt
    \label{eqn:obj}
\end{equation}
where $Q$ is defined as 
\begin{equation}
    Q = \begin{bmatrix}
    Q_{1} & 0\\
    0 & 0
    \end{bmatrix}, \;\; R > 0,
\end{equation}
and $Q_1$ is a positive definite matrix from \eqref{eqn:objfn}. 
The goal is to find a control feedback gain $K$ such that the control input $u=-K\psi$ when actuated in the system \eqref{eqn:nonlinear} minimizes the objective value \eqref{eqn:obj}. First, we need to present the Lyapunov equation for the infinite-dimensional Carleman state space model. Before proving the theorem, first we present an important lemma.
\begin{lem}
For a control input given by $u = -K \psi$, the closed-loop system can be written as
\begin{equation}
    \dot{\psi} = (\mathcal{A} - B_{\psi} K) \psi = \mathcal{A}_{cl}\psi,
    \label{eqn:closedloopAcl}
\end{equation}
where the closed-loop state matrix $\mathcal{A}_{cl}$ is a constant matrix that is independent of $\psi$.
\end{lem}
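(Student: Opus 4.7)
The plan is to show that, although $B_\psi$ literally depends on $\psi$ through its second block in \eqref{eqn:infcarl}, the product $B_\psi K \psi$ obtained after substituting $u=-K\psi$ re-arranges into a linear map in $\psi$ with constant coefficients, by exploiting the self-similarity of the Kronecker tower that makes up $\psi$.

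First I would split $B_\psi$ into a $\psi$-independent part and a $\psi$-linear part. Reading off \eqref{eqn:infcarl}, for each input $u_i$ one may write $B_{i(\psi)}=\bar{B}_i+\tilde{B}_i\,\psi$, where $\bar{B}_i$ stacks $B_{0i}$ with zeros, and $\tilde{B}_i$ is the constant infinite block matrix built from $B_{\ell i}$ and $B_{(j,\ell)i}$. Assembling across all $i$ gives $B_\psi=\bar{B}+\mathcal{T}(\psi)$, where $\bar{B}$ is constant and $\mathcal{T}(\psi)$ is linear in $\psi$ with constant coefficient tensors. Substituting $u=-K\psi$ into \eqref{eqn:simplinfcarl} then yields
\begin{equation}
  \dot{\psi}=\mathcal{A}\psi-\bar{B}K\psi-\mathcal{T}(\psi)K\psi .
\end{equation}
The first two terms are already linear in $\psi$ with constant matrices; the whole issue is the third, quadratic-looking term $\mathcal{T}(\psi)K\psi$.

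Next I would exploit the defining property of Carleman coordinates: if $\psi_\alpha$ is an entry of the block $x^{[j]}$ and $\psi_\beta$ is an entry of $x^{[\ell]}$, then $\psi_\alpha\psi_\beta$ is an entry of $x^{[j+\ell]}$, which in turn is a component of $\psi$. Each scalar entry of $\mathcal{T}(\psi)K\psi$ is a finite sum of such products $\psi_\alpha\psi_\beta$ with constant coefficients coming from $\tilde{B}_i$ and from the rows of $K$. By relabelling $\psi_\alpha\psi_\beta$ as the corresponding entry of the appropriate higher Kronecker block of $\psi$, I can rewrite $\mathcal{T}(\psi)K\psi=\mathcal{C}\psi$ for a constant infinite matrix $\mathcal{C}$ whose entries depend only on $K$ and on the $B_{(j,\ell)i}$'s. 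Defining
\begin{equation}
    \mathcal{A}_{cl}:=\mathcal{A}-\bar{B}K-\mathcal{C},
\end{equation}
one immediately obtains $\dot{\psi}=\mathcal{A}_{cl}\psi$ with $\mathcal{A}_{cl}$ independent of $\psi$, which is the claim. Identifying $(\bar{B}+\mathcal{C})$ with $B_\psi K$ in the notation of \eqref{eqn:closedloopAcl} then matches the stated form.

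The main obstacle I expect is bookkeeping, not conceptual. Concretely, I need to verify that the re-indexing from $(\alpha,\beta)$ with $\psi_\alpha\in x^{[j]}$, $\psi_\beta\in x^{[\ell]}$ to an entry of $x^{[j+\ell]}$ is well-defined in spite of the redundancy in the Kronecker product (the same monomial of $x$ appears in several positions of $x^{[j+\ell]}$), so that the coefficients assembled into $\mathcal{C}$ are unambiguous; a consistent choice of a unique representative entry for each monomial, together with symmetrization of the $B_{(j,\ell)i}$'s and of the rows of $K$, handles this. In the infinite-dimensional setting I would further note that each block row of $\mathcal{A}_{cl}$ receives contributions from only finitely many Kronecker blocks of $\psi$, thanks to the upper block-triangular structure of $\mathcal{A}$ and of $\tilde{B}_i$, so $\mathcal{C}$ is well-defined row-by-row without any convergence issues. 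The truncated version of this identification is exactly what will be reused in Section V.
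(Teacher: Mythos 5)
Your proposal is correct and follows essentially the same route as the paper, which simply asserts that $B_{\psi}K\psi$ collapses to $\mathcal{K}\psi$ for a constant infinite-dimensional matrix and defers the matrix bookkeeping to the cited reference; your write-up supplies the underlying mechanism (the self-similarity $\psi_\alpha\psi_\beta \in x^{[j+\ell]}$ that absorbs the quadratic term into a linear one) in more detail than the paper does.
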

\begin{proof}
We will present the proof in brief. Note that since $\psi$ is an infinite dimensional vector, one can see that the term $B_{\psi}K\psi$ can be simplified as $B_{\psi}K\psi = \mathcal{K}\psi$, where $\mathcal{K}$ is a constant infinite dimensional matrix. Therefore \eqref{eqn:closedloopAcl} becomes $\dot{\psi} = (\mathcal{A - K})\psi = \mathcal{A}_{cl}\psi$. Detailed matrix formations can be found in \cite{motee}.
\end{proof}

\begin{theorem}
Given that $(\mathcal{A},Q^{1/2})$ is controllable, if $K$ is the optimal controller minimizing the objective function \eqref{eqn:obj} for the system \eqref{eqn:infcarl}, then $K$ must satisfy the Lyapunov-like infinite-dimensional equation for a symmetric matrix $P$ given by\vspace{-0.15cm}
\begin{equation}
\mathcal{A}_{cl}^T P + P \mathcal{A}_{cl} = - (Q + K^T R K),\vspace{-0.15cm}
    \label{eqn:carllyapunov}
\end{equation}
where $\mathcal{A}_{cl} \psi = \mathcal{A} \psi - B_{\psi} K\psi$.
\end{theorem}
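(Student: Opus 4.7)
The plan is to parallel the standard finite-dimensional LQR Lyapunov argument, but lifted to the infinite-dimensional Carleman space, using Lemma~1 as the key enabling fact. Because Lemma~1 asserts that the closed-loop dynamics under $u=-K\psi$ reduce to the \emph{linear} evolution $\dot{\psi} = \mathcal{A}_{cl}\psi$ with a constant (state-independent) operator $\mathcal{A}_{cl}$, the cost-to-go along the closed-loop trajectory becomes a quadratic form in $\psi_0$, and the whole machinery of quadratic Lyapunov functions applies.

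First I would introduce the candidate value function $V(\psi) = \psi^T P \psi$ with $P=P^T$ and interpret $P$ as the symmetric operator encoding the optimal cost-to-go $V(\psi_0)=J^{\star}(\psi_0)$. Under the optimal $K$, the closed-loop trajectory $\psi(t) = e^{\mathcal{A}_{cl} t}\psi_0$ is well-defined on the Carleman space, and substituting $u=-K\psi$ into \eqref{eqn:obj} gives
\begin{equation}
J = \tfrac{1}{2}\int_0^{\infty}\psi^T\bigl(Q + K^T R K\bigr)\psi\, dt.
\end{equation}
Since the integrand is quadratic in $\psi$ and $\psi(t)$ is linear in $\psi_0$, the cost is quadratic in $\psi_0$, which justifies writing $J = \tfrac{1}{2}\psi_0^T P \psi_0$ for some symmetric $P$.

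Next I would differentiate $V$ along the closed-loop flow. Using Lemma~1,
\begin{equation}
\dot{V} \;=\; \dot{\psi}^T P \psi + \psi^T P \dot{\psi} \;=\; \psi^T\bigl(\mathcal{A}_{cl}^T P + P\mathcal{A}_{cl}\bigr)\psi.
\end{equation}
On the other hand, optimality and the dynamic programming principle demand that along the optimal trajectory $\dot{V} = -(\psi^T Q \psi + u^T R u) = -\psi^T(Q + K^T R K)\psi$. Equating the two expressions and invoking the fact that the relation must hold for every admissible $\psi$ yields the Lyapunov-like equation \eqref{eqn:carllyapunov}. Integration from $0$ to $\infty$, combined with $\psi(t)\to 0$ ensured by stability of $\mathcal{A}_{cl}$, returns $V(\psi_0) = J$, confirming the interpretation of $P$ as the cost operator.

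The main obstacle I anticipate is purely technical rather than conceptual: because $\psi$, $P$, $\mathcal{A}$, and $B_{\psi}$ all live in an infinite-dimensional space, one must argue that (i) the semigroup $e^{\mathcal{A}_{cl}t}$ is well-defined and exponentially stable on an appropriate weighted $\ell^2$-type space of Carleman coordinates (so that the integral in $J$ converges and $\psi(\infty)=0$), and (ii) the controllability/observability hypothesis on $(\mathcal{A},Q^{1/2})$ is exactly what is needed to guarantee the existence of a symmetric, bounded $P$ solving \eqref{eqn:carllyapunov}. Once these functional-analytic points are granted, the algebraic manipulation collapses to the classical finite-dimensional derivation and the theorem follows.
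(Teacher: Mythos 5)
Your proof is correct, but it takes a genuinely different route from the paper. You argue via the value function: define $V(\psi)=\psi^TP\psi$ as the closed-loop cost-to-go, differentiate along the flow $\dot\psi=\mathcal{A}_{cl}\psi$ guaranteed by Lemma~1, and equate $\dot V=\psi^T(\mathcal{A}_{cl}^TP+P\mathcal{A}_{cl})\psi$ with the dynamic-programming identity $\dot V=-\psi^T(Q+K^TRK)\psi$. The paper instead uses the Pontryagin/costate machinery: it forms the Hamiltonian $\mathcal{H}=\tfrac12(\psi^TQ\psi+u^TRu)+\Lambda^T(\mathcal{A}\psi+B_\psi u)$, applies the stationarity condition $\partial\mathcal{H}/\partial u=0$ to get $u=-R^{-1}B_\psi^T\Lambda$ as in \eqref{eqn:infcontrol}, posits $\Lambda=P\psi$, and then uses the costate equation $\dot\Lambda=-\partial\mathcal{H}/\partial\psi$ together with $\dot\Lambda=P\mathcal{A}_{cl}\psi$ to read off \eqref{eqn:carllyapunov}. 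Both derivations are valid and collapse to the same algebra once $\mathcal{A}_{cl}$ is known to be a constant operator. What the paper's route buys is the explicit feedback structure $K=R^{-1}([B_1^T\ 0]P+W_P^T)$ obtained through the quadrization step $\psi^TPB_\psi=\psi^TW_P$; this formula is essential for the policy-improvement update \eqref{eqn:learnK} used throughout the rest of the paper, and your argument does not produce it. What your route buys is economy and slightly more generality: the identity $\dot V=-\psi^T(Q+K^TRK)\psi$ holds for the cost matrix of \emph{any} stabilizing gain, so you in fact establish the Lyapunov equation for every stabilizing $K$, of which the optimal one is a special case --- which is precisely the form needed in the policy-evaluation step of Theorem~2. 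Your explicit acknowledgment of the functional-analytic gaps (well-posedness of $e^{\mathcal{A}_{cl}t}$ and boundedness of $P$ in the infinite-dimensional Carleman space) is a point the paper leaves implicit.
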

\begin{proof}
Following standard optimal control theory, we define the Hamiltonian as 
\begin{equation}
    \mathcal{H} = \dfrac{1}{2} (\psi^TQ\psi + u^T R u) + \Lambda^T (\mathcal{A}\psi + B_{\psi}u),
\end{equation}
where $\Lambda$ is the Lagrange multiplier in the infinite dimension.

For the first optimality condition, we have\vspace{-0.1cm}
\begin{equation}
    \dfrac{\partial \mathcal{H}}{\partial u} = 0 \longrightarrow u = -R^{-1}B_{\psi}^T\Lambda\vspace{-0.15cm}
    \label{eqn:infcontrol}
\end{equation}
If we have $\Lambda = P\psi$, then the control input can be written as $u = -R^{-1}B_{\psi}^T P \psi$. Since $B_{\psi}$ follows from \eqref{eqn:infcarl}, from \cite{motee2} we can write $u$ as \vspace{-0.06cm}
\begin{equation}
    u = -\underbrace{R^{-1} \Big( \begin{bmatrix}
    B_{1} \\ 0
    \end{bmatrix}^T P + W_P^T
    \Big)}_{K} \psi.\vspace{-0.15cm}
\end{equation}
where $B_1 = [B_{01} \, B_{02} \,\hdots \,B_{0k}]$.
As $\psi$ is an infinite-dimensional vector, using quadrization \cite{motee2} we can write
\begin{multline}
    \psi^TP \begin{bmatrix}
    B_{1i} & B_{2i} & B_{3i} & \hdots \\
    B_{(2,0)i} & B_{(2,1)i} & B_{(2,2)i} & \hdots \\
    0 & B_{(3,0)i} & B_{(3,1)i} & \hdots
    \end{bmatrix} \psi = \psi^T W^i_P \\ \text{and }\,\, W_p = [W^1_P \, W^2_P \, \hdots \, W^k_P]
\end{multline}
 where $W_P$ is matrix of  appropriate dimensions (see details in \cite{motee2}). This is possible because the higher-order terms get absorbed in $\psi$. Using the second optimality condition, we have\vspace{-0.1cm}
\begin{equation}
    \dot{\Lambda} = -\dfrac{\partial \mathcal{H}}{\partial \psi} =  -(Q\psi + K^TRK \psi+ \mathcal{A}_{cl}^T P \psi),\vspace{-0.1cm}
    \label{eqn:intlyap}
\end{equation}
where $\mathcal{A}_{cl} \psi = \mathcal{A} \psi - B_{\psi} K\psi$. Substituting $\Lambda = P \psi$, we have $\dot{\Lambda} = P \dot{\psi}$. Since $\dot{\psi} = \mathcal{A}_{cl} \psi$, and \eqref{eqn:intlyap} must hold for all $\psi$,  we finally get  the Lyapunov-like equation in the infinite-dimensional Carleman space as in \eqref{eqn:carllyapunov}.
\end{proof}
We next discuss an RL control method for the infinite-dimensional system~\eqref{eqn:infcarl}.

\section{RL Control for Lifted System}
\subsection{Policy iteration for Carleman systems}
For the infinite-dimension model \eqref{eqn:infcarl}, the cost-to-go function can be written as \cite{motee2}
\begin{equation}
    V(t) = \int_t^{\infty} (\psi^T Q \psi + u^T R u) \;d\tau = \psi(t)^T P \psi(t),
\end{equation}
where $P$ is the solution of the Lyapunov equation \eqref{eqn:carllyapunov}. Due to the structure of $Q$, we can rewrite $V(t)$ as 
\begin{equation}
     V(t) = \int_t^{\infty} (x^T Q_{1} x + u^T R u) \;d\tau.
     \label{eqn:costtogo}
\end{equation}
For an iteration index $i$, and $P_i$, note that we can write $V_i(t) - V_i(t+T)$ for $T>0$ as
\begin{multline}
    V_i(t) - V_i(t+T)=\psi^T(t)P_i\psi(t) - \psi^T(t+T)P_i\psi(t+T)\\ 
     =\underbrace{\int_t^{t+T} (x^T Q_1 x + \psi^T K_i^TRK_i \psi) d\tau}_{d(t,K_i)} ,
    \label{eqn:diffPsi}
\end{multline}
where $T$ is the time-step for updating $P_i$. We define $\bar{p}_i$, the vectorized version of $P_i$ such that
\begin{equation}
   \psi^T(t)P_i\psi(t) = \bar{p}^T_i \psi(t),
\end{equation}
since $\psi$ is an infinite-dimensional vector.  Therefore, we can write \eqref{eqn:diffPsi} as \begin{equation}
    \bar{p}^T_i \big( \psi(t) - \psi(t+T) \big) = d(t,K_i).
    \label{eqn:pi}
\end{equation}
The solution $P_i$ from \eqref{eqn:pi} yields the new control input 
\begin{equation}
    u = -R^{-1} \Big( \begin{bmatrix}
    B_1 \\ 0
    \end{bmatrix}^T P_i + W_{P_i}^T
    \Big) \psi = -K_{i+1} \psi.
    \label{eqn:learnK}
\end{equation}
Note that the method  presented above is an {\it on-policy} method, meaning that the learned control gain $K_{i}$ is iteratively implemented in the plant to learn the next control gain $K_{i+1}$. Such learning will generally require an exploration noise to persistently excite the system to learn $P_i$ correctly. The online implementation of the above steps with exploration noise will be discussed in Section VI. Next, we prove that the policy iteration method presented above leads to the stabilizing optimal control. For this, it is sufficient to establish that
\begin{enumerate}
    \item Solving \eqref{eqn:pi} is equivalent to solving the Lyapunov equation \eqref{eqn:carllyapunov}.
    \item Starting from a stabilizing $P_0$, all $P_i, \; i\geq 1 $ are stabilizing.
    \item For $i \longrightarrow \infty$, $P_i \longrightarrow P^{*}$, where $P^{*}$ is the optimal cost matrix for \eqref{eqn:infcarl}.
\end{enumerate}

\subsection{Theoretical Results}
\begin{theorem}
Solving the data-driven equation \eqref{eqn:pi} is equivalent to solving the Lyapunov equation $\mathcal{A}_{cl,i-1}^T P_i + P_i \mathcal{A}_{cl,i-1} = - (Q + K_i^T R K_i)$, where $\mathcal{A}_{cl,i-1}\psi(t) = (\mathcal{A} - B(\psi)R^{-1}B(\psi)^TP_{i-1})\psi(t)$. 
\end{theorem}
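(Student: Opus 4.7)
The plan is to establish a two-way implication between the data-driven relation \eqref{eqn:pi} and the Lyapunov equation of the statement. The forward direction rests on recognizing that, during iteration $i$, the plant is driven by the controller $K_i$, which is derived from $P_{i-1}$, so the closed-loop dynamics are exactly $\dot{\psi} = \mathcal{A}_{cl,i-1}\psi$ in the notation of the theorem. I would start by assuming that $P_i$ solves the Lyapunov equation $\mathcal{A}_{cl,i-1}^T P_i + P_i\mathcal{A}_{cl,i-1} = -(Q + K_i^T R K_i)$ and then differentiate the quadratic form $\psi(t)^T P_i \psi(t)$ along this closed-loop trajectory.

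The derivative calculation gives
\begin{equation}
\tfrac{d}{dt}\bigl(\psi^T P_i \psi\bigr) = \psi^T\bigl(\mathcal{A}_{cl,i-1}^T P_i + P_i \mathcal{A}_{cl,i-1}\bigr)\psi = -\bigl(x^T Q_1 x + \psi^T K_i^T R K_i \psi\bigr),
\end{equation}
after substituting the Lyapunov equation and using the block structure of $Q$. Integrating both sides from $t$ to $t+T$ produces exactly the identity \eqref{eqn:diffPsi}, and then rewriting $\psi^T P_i \psi$ as $\bar p_i^T \psi$ via the vectorization introduced before \eqref{eqn:pi} yields \eqref{eqn:pi} itself. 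So any $P_i$ satisfying the Lyapunov equation automatically satisfies the data-driven equation for every time instant $t$ along the trajectory.

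For the converse direction, I would argue uniqueness: the data-driven equation can be written as a linear regression in the unknown $\bar p_i$, collected over a window of samples. If the regressor matrix $\psi(t)-\psi(t+T)$ is persistently exciting (which is the standard assumption in on-policy Carleman-based RL, guaranteed here by the exploration noise described in Section VI), then the solution $\bar p_i$ of \eqref{eqn:pi} is unique. Since the $P_i$ obtained from the Lyapunov equation is a valid solution by the forward direction, it must coincide with the unique solution of \eqref{eqn:pi}, completing the equivalence.

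The main obstacle is handling the infinite-dimensional character of $\psi$, $P_i$ and $B_\psi$ rigorously: one must justify (i) that the quadrization step $\psi^T P_i \psi = \bar p_i^T \psi$ remains well defined term-by-term in the Carleman lifting, by appealing to the argument in \cite{motee2}; and (ii) that $B_\psi K_i \psi$ can be rewritten as $B_\psi R^{-1} B_\psi^T P_{i-1} \psi$, which is what absorbs the $W_{P_{i-1}}$ correction from~\eqref{eqn:learnK} into the notation $\mathcal{A}_{cl,i-1}$ used in the theorem. Once these two points are handled, the remainder of the argument is a straightforward adaptation of the finite-dimensional integral-reinforcement-learning equivalence proof.
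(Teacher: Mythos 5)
Your forward direction is essentially the paper's entire proof: the paper also takes $V_i(t)=\psi(t)^T P_i\psi(t)$, differentiates along the closed-loop trajectory to get $\tfrac{d}{dt}(\psi^T P_i\psi)=\psi^T(\mathcal{A}_{cl,i-1}^T P_i+P_i\mathcal{A}_{cl,i-1})\psi$, substitutes the Lyapunov equation, and integrates from $t$ to $t+T$ to recover \eqref{eqn:psiPiter}, i.e.\ \eqref{eqn:diffPsi} and hence \eqref{eqn:pi}. Where you go beyond the paper is the converse: the paper stops after showing that a Lyapunov solution satisfies the data-driven relation and never argues that a solution of \eqref{eqn:pi} must in turn satisfy the Lyapunov equation, so the claimed ``equivalence'' is only half-proved there. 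Your uniqueness argument via persistency of excitation of the regressors $\psi(t)-\psi(t+T)$ (which the paper only invokes informally in Section VI when discussing exploration noise) is exactly the missing ingredient, and it mirrors the standard integral-reinforcement-learning equivalence proofs in the finite-dimensional literature. Your two flagged obstacles --- the well-posedness of the quadrization $\psi^T P_i\psi=\bar p_i^T\psi$ and the absorption of $W_{P_{i-1}}$ into $B_\psi R^{-1}B_\psi^T P_{i-1}$ so that $K_i\psi = R^{-1}B_\psi^T P_{i-1}\psi$ --- are precisely the points the paper handles by citation to \cite{motee2} and Lemma 1 rather than in the body of this proof, so deferring them the same way is consistent with the paper's level of rigor. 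In short: same core computation, but your version is the more complete one.
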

\begin{proof}
Consider the Lyapunov function of the system $ V_i(t) = \psi(t)^TP_i\psi(t)$. Then we can write 
\begin{gather}
    \dfrac{d \big(\psi(t)^TP_i\psi(t) \big)}{dt} = \psi(t)^T (\mathcal{A}_{cl,i-1}^TP_i + P_i\mathcal{A}_{cl,i-1})\psi(t),\vspace{-0.2cm}
    \label{eqn:psiPpsi}
\end{gather}
Since we want to find $P_i > 0$ such that the following Lyapunov equation holds\vspace{-0.2cm}
\begin{equation}
    \mathcal{A}_{cl,i-1}^TP_i + P_i\mathcal{A}_{cl,i-1} = - (Q + K_i^TRK_i),
    \label{eqn:lyapiter}
\end{equation}
we integrate \eqref{eqn:psiPpsi} from $t$ to $t+T$ and substitute~\eqref{eqn:lyapiter} to obtain
\begin{multline}
     \psi(t+T)^TP_i\psi(t+T)- \psi(t)^TP_i\psi(t) \\ = \int_t^{t+T} -(\psi(\tau)^T \big( K_i^TRK_i + Q\big)\psi(\tau) d\tau.\vspace{-0.2cm}
     \label{eqn:psiPiter}
\end{multline}\vspace{-0.3cm}
\end{proof}

\begin{theorem}
Starting from a stabilizing $(P_0,K_0)$, if $P_i$ is updated according to \eqref{eqn:diffPsi}, then the closed-loop system $\mathcal{A}_{cl,i}, \forall i \geq 1$ is stable.
\end{theorem}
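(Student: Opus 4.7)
The natural plan is induction on the iteration index $i$, using the classical Kleinman-style argument adapted to the infinite-dimensional Carleman setting. The base case $i=0$ is covered by assumption: $(P_0,K_0)$ is stabilizing, so $\mathcal{A}_{cl,0}=\mathcal{A}-B_\psi K_0$ (which, by Lemma~1, is a constant operator) is stable and $P_0\succ 0$ satisfies the Lyapunov equation of Theorem~2 with index $0$. For the inductive step, I would assume $\mathcal{A}_{cl,i-1}$ is stable and $P_{i-1}\succ 0$ solves \eqref{eqn:lyapiter} with $K_{i-1}$, and use $P_{i-1}$ itself as a Lyapunov function candidate for the \emph{new} closed-loop operator $\mathcal{A}_{cl,i}=\mathcal{A}-B_\psi K_i$.

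The key algebraic manipulation is to write $\mathcal{A}_{cl,i}=\mathcal{A}_{cl,i-1}+B_\psi(K_{i-1}-K_i)$ and compute
\begin{equation}
\mathcal{A}_{cl,i}^T P_{i-1}+P_{i-1}\mathcal{A}_{cl,i}
=\bigl[\mathcal{A}_{cl,i-1}^T P_{i-1}+P_{i-1}\mathcal{A}_{cl,i-1}\bigr]+\Delta,
\end{equation}
where $\Delta=(K_{i-1}-K_i)^T B_\psi^T P_{i-1}+P_{i-1} B_\psi(K_{i-1}-K_i)$. Substituting the Lyapunov equation for index $i-1$ into the bracketed term, and using the policy-improvement identity $RK_i=B_\psi^T P_{i-1}$ (interpreted in the quadrized sense introduced around \eqref{eqn:learnK}, so that the products $\psi^T P_{i-1}B_\psi$ collapse to the constant-matrix form $[B_1;0]^T P_{i-1}+W_{P_{i-1}}^T$ used to define $K_i$), I would simplify $\Delta$ and complete the square to obtain
\begin{equation}
\mathcal{A}_{cl,i}^T P_{i-1}+P_{i-1}\mathcal{A}_{cl,i}
=-Q-K_i^T R K_i-(K_i-K_{i-1})^T R(K_i-K_{i-1}).
\end{equation}
The right-hand side is negative semidefinite since $Q\succeq 0$ and $R\succ 0$, with strict negative definiteness inherited from the controllability/observability hypothesis $(\mathcal{A},Q^{1/2})$ stated in Theorem~2. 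Invoking Lyapunov's direct method with the positive-definite operator $P_{i-1}$ then yields stability of $\mathcal{A}_{cl,i}$, closing the induction.

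The main obstacle, and the step that requires the most care, is not the algebra itself but justifying the identity $B_\psi^T P_{i-1}=RK_i$ in the infinite-dimensional Carleman space, where $B_\psi$ is state dependent. Lemma~1 together with the quadrization formalism tells us that $B_\psi K_i\psi$ reduces to $\mathcal{K}_i\psi$ for a constant operator, and the definition \eqref{eqn:learnK} absorbs the higher-order Kronecker terms through $W_{P_{i-1}}$; the proof must appeal to this quadrization consistently so that the completion-of-squares identity is genuinely a matrix identity acting on $\psi$, rather than a formal manipulation. A secondary technical point is making Lyapunov's theorem applicable in the infinite-dimensional setting, which I would handle by restricting attention to finite truncations that are refined with $\mathbf{N}\to\infty$ (consistent with the paper's later treatment of truncated realizations in Section~V), so that standard finite-dimensional Lyapunov stability applies at each level and the limiting argument passes to the Carleman operator.
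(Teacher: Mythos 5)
Your proof is correct and is essentially the paper's own argument in Kleinman form: the paper likewise uses the cost matrix of the current policy as a Lyapunov function for the closed loop under the improved policy, and its final expression $-\psi^T\bigl[(P_{i-1}-P_i)B_\psi R^{-1}B_\psi^T(P_{i-1}-P_i)+Q+P_iB_\psi R^{-1}B_\psi^TP_i\bigr]\psi$ is exactly your $-\psi^T\bigl[Q+K_{new}^TRK_{new}+(K_{new}-K_{old})^TR(K_{new}-K_{old})\bigr]\psi$ after the index shift, with LaSalle playing the role of your observability appeal. The only differences are cosmetic (explicit induction framing, bookkeeping of the decomposition, and your labeling of $\mathcal{A}_{cl,i}$ being offset by one from the paper's convention).
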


\begin{proof}
Consider the Lyapunov function\vspace{-0.1cm}
\begin{equation}
    V_i(t) = \psi(t)^TP_i\psi(t).\vspace{-0.12cm}
\end{equation}
Taking derivative on both sides, we have\vspace{-0.2cm}
\begin{multline}
    \dot{V}_i(t) = \psi(t)^T \Bigg( P_i \big(\mathcal{A} - B_{\psi} R^{-1}B_{\psi}^TP_i\big) \\ + \big(\mathcal{A} - B_{\psi} R^{-1}B_{\psi}^TP_i\big)^T P_i \Bigg)\psi(t),\vspace{-0.3cm}
\end{multline}
which can be rewritten as\vspace{-0.2cm}
\begin{multline}
    \dot{V}_i(t) = \psi(t)^T \Bigg( P_i \big(\mathcal{A} - B_{\psi} R^{-1}B_{\psi}^TP_{i-1}\big)\\ + \big(\mathcal{A} - B_{\psi} R^{-1}B_{\psi}^TP_{i-1}\big)^T P_i \Bigg)\psi(t) \\
    + \psi(t)^T \Bigg( P_i \big(  B_{\psi} R^{-1}B_{\psi}^T (P_{i-1} - P_{i}) \big)  \\ +   \big( (P_{i-1} - P_{i}) B_{\psi} R^{-1}B_{\psi}^T  \big)P_i \Bigg) \psi(t)
\end{multline}
Recalling from \eqref{eqn:lyapiter} that
\begin{equation}
    \mathcal{A}_{cl,i-1}^T P_i + P_i\mathcal{A}_{cl,i-1} = - \big( K_i^TRK_i + Q),
\end{equation}
$\dot{V}_i(t)$ can be written as 
\begin{multline}
\dot{V}_i(t) =   -\psi(t)^T \big(K_i^TRK_i + Q \big) \psi(t) \\ +   \psi(t)^T \Bigg( P_i \big(  B_{\psi} R^{-1}B_{\psi}^T (P_{i-1} - P_{i}) \big) \\ +   \big( (P_{i-1} - P_{i}) B_{\psi} R^{-1}B_{\psi}^T  \big)P_i \Bigg) \psi(t).
\label{eqn:Vi3}
\end{multline}
We also have $ K_i \psi(t) = R^{-1}B(\psi)^T P_{i-1} \psi(t)$.
Therefore, we can write  \eqref{eqn:Vi3} as 
\begin{multline}
    \dot{V}_i(t) = -\psi(t)^T \Bigg( (P_{i-1} - P_{i}) B_{\psi} R^{-1}B_{\psi}^T (P_{i-1} - P_{i}) \\  + Q +  P_{i} B_{\psi} R^{-1}B_{\psi}^T P_{i}\Bigg) \psi(t).
    \label{eqn:Vi4}
\end{multline}
From \eqref{eqn:Vi4}, we can see that $\dot{V}_i \leq 0 \,, \forall \psi$. Setting $\dot V_i=0$ leads to $x=0$. 
Since we also have 
\begin{equation}
    \lim_{x \rightarrow 0} x^{[k]} \rightarrow 0 \, , \forall k \in \mathbb{Z}^{+} \Leftrightarrow \lim_{x \rightarrow 0} \psi \rightarrow 0,
\end{equation}
we conclude that $\dot V=0$ implies $\psi=0$.
Therefore, it follows from LaSalle's invariance principle~\cite{lassale} that $\psi = 0$ is asymptotically stable. In other words, if we start from a stabilizing $P_0$, and use \eqref{eqn:diffPsi} to update $P_i$, all $P_i$ and corresponding $K_i$ will be stabilizing. 
\end{proof}

\subsubsection{Convergence}
To prove this, we follow the steps listed in Section 4 of the Aganov-Gajic algorithm \cite{gajic} for a general bilinear system. First, we present the following lemma  \ref{theorem:convergence} to prove convergence of $P_i$.

\begin{lemma}
The sequence $\psi_i$ uniformly converges to $\psi^{*}$ and $u_i$ uniformly converges to $u^{*}$ for every $t \in [0,T] \; , 0< T <\infty $ given the iterative updates are made as \begin{enumerate}
    \item $P_i$ is updated as solution of $\psi^T \big(P_i \mathcal{A} + \mathcal{A}^TP_i + Q - P_iB_{\psi_{i-1}}R^{-1}B_{\psi_{i-1}}^TP_i \big) \psi = 0$,
    \item Control input is $u_i (t) = -R^{-1}B_{\psi}P_i \psi(t)$,
\end{enumerate}
where $\psi_{i-1}$ is the trajectory due to $P_{i-1}$.
\label{theorem:convergence}
\end{lemma}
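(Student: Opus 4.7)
The plan is to follow the Aganovi\'c--Gaji\'c fixed-point framework \cite{gajic} for bilinear optimal control, adapted to the Carleman-lifted (infinite-dimensional) setting. The essential observation is that although the closed-loop dynamics are bilinear in $(\psi,u)$, each iteration \emph{freezes} the previous trajectory $\psi_{i-1}$ inside $B_{\psi_{i-1}}$, so that the update equation in condition~(1) becomes a standard algebraic Riccati equation whose ``input matrix'' is prescribed (time-varying through $\psi_{i-1}$). First I would invoke stabilizability of $(\mathcal{A},B_{\psi_{i-1}})$ together with the detectability of $(\mathcal{A},Q^{1/2})$ assumed in Theorem~1 to conclude existence, uniqueness, and positive semi-definiteness of $P_i$ at every iteration. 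A uniform bound $\|P_i\|\le\bar P$ is then obtained by evaluating the cost-to-go along any fixed stabilizing feedback, as done in the baseline step of \cite{gajic}.

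Next, I would estimate the error $e_i(t):=\psi_i(t)-\psi_{i-1}(t)$ along the closed-loop trajectories. Subtracting the ODEs governing $\psi_i$ and $\psi_{i-1}$ and using that $\psi\mapsto B_\psi$ is affine (so $B_{\psi_i}-B_{\psi_{i-1}}$ is linear in $e_i$), together with the uniform bounds on $P_i$ and $P_{i-1}$ from the previous step, produces a Volterra-type inequality of the form
\[
\|e_i(t)\|\le C\!\int_0^t\!\|e_i(\tau)\|\,d\tau + C'\!\int_0^t\!\|e_{i-1}(\tau)\|\,d\tau ,
\]
with constants $C,C'$ depending only on $T$, $\bar P$, $\|R^{-1}\|$, and the norms of the Carleman coefficient matrices. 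A standard Gronwall argument then yields a contraction-like bound $\|e_i\|_{C([0,T])}\le \gamma(T)\,\|e_{i-1}\|_{C([0,T])}$ with $\gamma(T)\to 0$ as $T\to 0$, so that for sufficiently small $T$ (or after finitely many compositions of the map on the full interval) the iteration is a contraction on the Banach space of continuous trajectories on $[0,T]$.

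The Banach fixed-point theorem then gives a unique limit $\psi^{*}$ and uniform convergence $\psi_i\to\psi^{*}$ on $[0,T]$. Uniform convergence $u_i\to u^{*}$ is immediate from $u_i=-R^{-1}B_{\psi_i}^{T}P_i\psi_i$, since the map $(\psi,P)\mapsto R^{-1}B_\psi^{T}P\psi$ is Lipschitz on bounded sets and both $\psi_i$ and $P_i$ converge (the latter by continuity of Riccati solutions with respect to their data).

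The hard part, I expect, is handling the infinite-dimensional Carleman structure rigorously. One must guarantee that (i) $P_i$ and $B_{\psi_i}$ remain uniformly bounded operators across iterations despite the unbounded Kronecker-power structure of $\psi$, and (ii) the Gronwall constants $C,C'$ stay finite independently of the Carleman order. A natural route is to first establish the estimate for the $\mathbf{N}^{\text{th}}$-order truncation, where all operators are genuine finite matrices, verify that the bounds remain uniform in $\mathbf{N}$ under the convergence hypotheses on the Carleman series used in \cite{motee,motee2}, and then pass to the limit $\mathbf{N}\to\infty$---essentially lifting the finite-dimensional Aganovi\'c--Gaji\'c procedure to the Carleman setting.
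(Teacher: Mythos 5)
Your proposal is correct and takes essentially the same route as the paper: the paper's own proof of this lemma is a two-line deferral to the Aganovi\'c--Gaji\'c successive-approximation argument of \cite{gajic}, obtained by replacing $x$ with $\psi$ and using that the costate is $\Lambda = P\psi$ with $P$ constant, which is exactly the framework you reconstruct (freeze $B_{\psi_{i-1}}$ at the previous trajectory, solve the resulting Riccati equation, bound $P_i$ uniformly, and close with a Gronwall/contraction estimate on the trajectories). You in fact supply substantially more detail than the paper does --- including the honest caveat that the bounds must be shown uniform in the Carleman truncation order before passing to the infinite-dimensional limit, a point the paper leaves entirely to the cited references.
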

\begin{proof}
Proof follows from \cite{gajic}, by replacing $x$ by $\psi$, and having $J= P \psi$, where $P$ is independent of $\psi$ (as proved in Theorem 1) in the HJB equation (3) of \cite{gajic}.
\end{proof}

\begin{theorem}
The sequence $P_i$ uniformly converges to $P^{*}$ for every $t \in [0,T] \; , 0< T <\infty $ given the iterative updates for $P_i$ and $u_i$ are made according to (21), (22).
\end{theorem}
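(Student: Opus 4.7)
The plan is to transplant Kleinman's convergence proof for LQR policy iteration into the infinite-dimensional Carleman setting, and then to use Lemma \ref{theorem:convergence} to lift pointwise convergence of $P_i$ to uniform convergence on compact time intervals. The three milestones are: (i) monotonicity $P_i - P_{i+1} \geq 0$ in the positive semidefinite sense; (ii) a lower bound $P_i - P^\ast \geq 0$ for every $i$; and (iii) identification of the limit as a solution of the optimal Lyapunov equation \eqref{eqn:carllyapunov}, which by uniqueness of the stabilizing solution must coincide with $P^\ast$.

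First I would subtract the Lyapunov equations for $P_i$ and $P_{i+1}$ supplied by Theorem 2. Using the policy-update rule $K_{i+1}\psi = R^{-1} B_\psi^T P_i \psi$ (modulo the $W_{P_i}$ reshaping described in the proof of Theorem 1), the difference produces
\begin{equation}
\mathcal{A}_{cl,i}^T (P_i - P_{i+1}) + (P_i - P_{i+1}) \mathcal{A}_{cl,i} = -(K_i - K_{i+1})^T R (K_i - K_{i+1}).
\end{equation}
By Theorem 3, $\mathcal{A}_{cl,i}$ is stable for every $i \geq 0$, so this Lyapunov equation with a non-positive right-hand side yields $P_i - P_{i+1} \geq 0$. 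Repeating the same comparison between $P_i$ and $P^\ast$ (with the optimal closed-loop operator stable by construction) gives $P_i - P^\ast \geq 0$.

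Monotonicity and the lower bound deliver a pointwise limit $P_\infty \geq P^\ast$. Passing to the limit $i \to \infty$ in the iterative update, together with $K_\infty = R^{-1}B_\psi^T P_\infty$ plus the $W_{P_\infty}$ term, shows that $P_\infty$ satisfies the very equation \eqref{eqn:carllyapunov} characterizing $P^\ast$; uniqueness of the stabilizing solution then forces $P_\infty = P^\ast$. To upgrade pointwise to uniform convergence on $[0,T]$, I would invoke Lemma \ref{theorem:convergence}: since $\psi_i \to \psi^\ast$ and $u_i \to u^\ast$ uniformly on $[0,T]$, the cost-to-go representation $\psi_i(t)^T P_i \psi_i(t) = \int_t^{\infty}(\psi_i^T Q \psi_i + u_i^T R u_i)\, d\tau$ transfers uniform convergence of the trajectories and inputs into uniform convergence of $P_i \to P^\ast$ in the sense required by the theorem.

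The hard part will not be the algebraic manipulations, which reproduce Kleinman's finite-dimensional argument, but the infinite-dimensional technicalities: making sense of the semidefinite order on the operators $P_i$, verifying that the Lyapunov equation with a stable $\mathcal{A}_{cl,i}$ admits a unique non-negative solution in a suitable operator topology, and checking that $B_\psi$, $W_P$ and $\mathcal{A}_{cl,i}$ remain bounded on the subspaces in which the iterates actually live so that the limit can be exchanged with the iteration. The authors' appeal to the Aganov--Gajic framework \cite{gajic} for general bilinear systems sidesteps these subtleties by directly importing the required monotonicity and limit-exchange results, once the Carleman bilinearization is checked to fit the hypotheses of that reference.
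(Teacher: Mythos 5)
Your proposal is essentially correct but follows a genuinely different route from the paper. The paper does not redo Kleinman's argument: it shows that the data-driven update \eqref{eqn:pi}--\eqref{eqn:learnK} is algebraically equivalent (after pre- and post-multiplication by $\psi_{i-1}$ and the identifications $\mathcal{A}_{i-1}\psi_{i-1}=\mathcal{A}_{cl,i-1}\psi_{i-1}$, $R^{-1}B_{\psi_{i-1}}^TP_{i-1}\psi_{i-1}=K_i\psi_{i-1}$) to the model-based successive-approximation scheme \eqref{eqn:Ahati-1} of Aganovic--Gajic for bilinear systems, and then imports their convergence theorem wholesale, citing \cite{bilinearRL} only a posteriori for the monotonicity $P_0\geq P_1\geq\cdots\geq P^{*}$ that you derive directly. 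Your Kleinman-style argument — the difference Lyapunov identity with right-hand side $-(K_i-K_{i+1})^TR(K_i-K_{i+1})$, the lower bound $P_i-P^{*}\geq 0$, and limit identification via uniqueness of the stabilizing solution — is consistent with the paper's own computation in Theorem 3 (where the term $(P_{i-1}-P_i)B_{\psi}R^{-1}B_{\psi}^T(P_{i-1}-P_i)$ is exactly your squared gain difference), and it has the merit of being self-contained rather than an appeal to equivalence with an external algorithm. Two caveats: first, for the bound $P_i-P^{*}\geq 0$ the correct comparison operator is $\mathcal{A}_{cl,i-1}$ (the closed loop under which $P_i$'s Lyapunov equation holds), not the optimal closed loop, which gives the right-hand side $-(K_i-K^{*})^TR(K_i-K^{*})$; second, all the quadrization identities such as $\psi^TP_iB_{\psi}u=\psi^TK_{i+1}^TRu$ hold only in the quadratic-form sense sandwiched between $\psi^T$ and $\psi$, so every step of your algebra must be read as an identity on $\psi^T(\cdot)\psi$, which is also how the paper states \eqref{eqn:carllyapunov}. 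You correctly flag that the genuine difficulty is the infinite-dimensional functional-analytic bookkeeping (well-posedness of the Lyapunov operator, uniqueness of the stabilizing solution, exchange of limits); the paper leaves exactly these points to \cite{gajic} as well, so neither proof is more rigorous on that front — yours simply makes the monotonicity mechanism explicit where the paper outsources it.
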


\begin{proof}
In the Aganov-Gajic algorithm (Algorithm 3.2 in \cite{gajic}) and \cite{bilinearRL}, it is shown that solving steps 1 and 2 of Lemma \ref{theorem:convergence} is equivalent to the iterative solution of
\begin{equation}
    \mathcal{A}_{i-1}^TP_i + P_i \mathcal{A}_{i-1} = -(Q + P_{i-1}B_{\psi_{i-1}}R^{-1}B_{\psi_{i-1}}^TP_{i-1}),
    \label{eqn:Ahati-1}
\end{equation}
where $\mathcal{A}_{i-1} \psi_{i-1} = (\mathcal{A} - B_{\psi_{i-1}}R^{-1}B_{\psi_{i-1}}^TP_{i-1})\psi_{i-1}$. Pre- and post-multiplying \eqref{eqn:Ahati-1} by $\psi_{i-1}^T$ and $\psi_{i-1}$, we obtain
\begin{multline}
    \psi_{i-1}(t)^T(\mathcal{A}_{i-1}^TP_i + P_i \mathcal{A}_{i-1})\psi_{i-1} = \\ -\psi_{i-1}^T(Q + P_{i-1}B_{\psi_{i-1}}R^{-1}B_{\psi_{i-1}}^TP_{i-1})\psi_{i-1}(t).\vspace{-0.25cm}
    \label{eqn:psiAhati-1}
\end{multline}
From Section I, we know that $\hat{A}_{i-1}\psi_{i-1}(t) = A_{cl,i-1}\psi_{i-1}(t)$ and $R^{-1}B_{\psi_{i-1}}^TP_{i-1}\psi_{i-1}(t) = K_i \psi_{i-1}(t)$. 
Therefore,  \eqref{eqn:psiAhati-1} can be written as 
\begin{multline}
    \psi_{i-1}(t)^T(\mathcal{A}_{cl,i-1}^TP_i + P_i \mathcal{A}_{cl,i-1})\psi_{i-1}(t) = \\ - \psi_{i-1}(t)^T(Q + K_i^TRK_i)\psi_{i-1}(t)^T.
    \label{eqn:proof2}
\end{multline}
Note that in \eqref{eqn:psiPiter}, $\psi(t)$ is the trajectory due to $P_{i-1}$. Therefore $\psi(t) = \psi_{i-1}(t)$. From Lemma \ref{theorem:convergence} and the equivalence of \eqref{eqn:proof2} and (19)-(22) to the model-based results  in \cite{gajic}, a similar proof of convergence follows for $P_i$ \cite{gajic}.
\end{proof}
\noindent It is also shown in  \cite{gajic} that for a finite large iteration index $i$, if $||P_i - P^{*}|| < \gamma$, where $\gamma$ is sufficiently small, the deviation from the optimal trajectory is $\psi^{*}(t) = \psi_i(t) + \mathcal{O}(\gamma)$. Since we have shown that all $P_i$ are stabilizing, we can write $J_i = \psi_0^T P_i \psi_0 < \infty$ and obtain\vspace{-0.12cm}
 \begin{equation}
     J_{i} - J^{*} = \psi_0^T(P_i - P^{*})\psi_0.\vspace{-0.1cm}
 \end{equation}
 Taking norm and writing $||J|| = J$, we further obtain\vspace{-0.1cm}
 \begin{equation}
    J^{*} \leq J_{i} \leq J^{*} + ||\psi_0||^2 \gamma. \vspace{-0.1cm}
 \end{equation}
 If each state in the initial perturbation satisfies  $||x_i|| < 1, i=1,2,..,n$, then $||\psi_0||$ is bounded. 
In \cite{bilinearRL}, it is also shown that for a bilinear system, the sequence $\{P_i\}$ is monotonically decreasing sequence, i.e
\begin{equation}
    P_0 \geq P_1 \geq \hdots \geq P^{*}.
\end{equation}

\section{$\mathbf{N}^{th}$ order truncation}
In practice, it would not be possible to construct an infinite-dimensional vector to learn the controller \eqref{eqn:learnK}. Therefore, one must resort to using a finite truncation of the lifted system, and executing the control design in the truncated space. A finite $\mathbf{N}^{th}$ order truncation of $\psi$ can be written as \vspace{-0.25cm}
\begin{equation}
\psi_\mathbf{N} = col(x \,,\, x\otimes x \,,\, \hdots \,,\, \underbrace{x \otimes \hdots \otimes x}_{\mathbf{N} \text{ times}}) \in \mathbb{R}^{\sum\limits_{i=1}^n n_i}.\vspace{-0.15cm}
\label{eqn:psiNdefn}
\end{equation}
The Carleman linearized dynamics for a truncated system can be written as \vspace{-0.25cm}
\begin{multline}
        \dot{\psi}_{\mathbf{N}}
 = \mathcal{A}_{\mathbf{N}} \psi_{\mathbf{N}} + \\ \sum_{i=1}^k \Big(\underbrace{\begin{bmatrix}B_{0i} \\ 0 \end{bmatrix}+ \begin{bmatrix}
    B_{1i} & B_{2i} & \hdots & B_{\mathbf{N}i} \\
    B_{(2,0)i} & B_{(2,1)i} & B_{(2,2)i} & \hdots \\
    0 & B_{(3,0)i} & B_{(3,1)i} & \hdots
    \end{bmatrix} \psi_{\mathbf{N}}\Big)}_{B_{i(\psi_{\mathbf{N}})}} u_i
    \label{eqn:Ncarl}
\end{multline}
where $\mathcal{A}_{\mathbf{N}}$ is the truncated state matrix, $B_1$ is the same as in (1), and $B_{2_{\mathbf{N-1}}}$ is the truncated version of $B_2$ in \eqref{eqn:infAinfB}. We define $B_{\psi_{\mathbf{N}}} = \begin{bmatrix} B_{1(\psi_{\mathbf{N}}} & \hdots & B_{k(\psi_{\mathbf{N}})} \end{bmatrix}$.

\subsection{LQR control objective and policy iteration}

The LQR objective function for the bilinear model is 
\begin{equation}
    J = \dfrac{1}{2} \int_0^{\infty}(\psi_{\mathbf{N}}
^T Q_{\mathbf{N}}  \psi_{\mathbf{N}} + u^T R u) \;dt
    \label{eqn:carlobj}
\end{equation}
where $Q_{\mathbf{N}}$ is defined as \vspace{-0.15cm}
\begin{equation}
    Q_{\mathbf{N}} = \begin{bmatrix}
    Q_{1} & 0\\
    0 & 0
    \end{bmatrix}.
\end{equation}
A truncated version of $P$ can be written in the compartmental form as \vspace{-0.2cm}
\begin{equation}
    P_{\mathbf{N}} = \begin{bmatrix}
    P_{\mathbf{N},11} & P_{\mathbf{N},12}\\
    P_{\mathbf{N},21} & P_{\mathbf{N},22}\\
    \end{bmatrix},\vspace{-0.15cm}
\end{equation}
where $P_{\mathbf{N},11} \in \mathbb{R}^{n \times n}$, $P_{\mathbf{N},12} \in \mathbb{R}^{n \times \text{dim}(\psi_{\mathbf{1-N}})}$, and $P_{\mathbf{N},21} \in \mathbb{R}^{\text{dim}(\psi_{\mathbf{N-1}}) \times n}$. Next, we truncate the control input $u$ such that it can be expressed as a linear feedback for the truncated Carleman system $u = -K_{\mathbf{N}} \psi_{\mathbf{N}}$. 
Similar to the infinite-dimensional case, the control input would be $u = -R^{-1}B_{\psi_{\mathbf{N-1}}}^T P_{\mathbf{N}}\psi_{\mathbf{N}}$. However, this $u$ has extra  terms with orders higher than $\mathbf{N}$ that are not linear in $\psi_{\mathbf{N}}$. Therefore, truncating these higher-order terms to represent $u$ as a linear feedback of the truncated Carleman states, we have\vspace{-0.15cm}
\begin{multline}
    u_i = -\dfrac{1}{r_{ii}}\Bigg([B_{0i}^T \,\,\, 0] P_{\mathbf{N}}\psi_{\mathbf{N}} + \\ \sum\limits_{k=1}^{\mathbf{N}} \sum\limits_{j+m \leq \mathbf{N}} x^{[j]} B_{(k,j-1)i} P_{\mathbf{N},km} x^{[m]}\Bigg)
\end{multline}
where 
\begin{equation}
    P_{\mathbf{N}} = \begin{bmatrix}
    P_{\mathbf{N},11} & P_{\mathbf{N},12} & \hdots & P_{\mathbf{N},1\mathbf{N}}\\
    \vdots & \vdots &\hdots &\vdots \\
     P_{\mathbf{N},\mathbf{N}1} & P_{\mathbf{N},\mathbf{N}2} & \hdots & P_{\mathbf{N},\mathbf{N}\mathbf{N}}\\
    \end{bmatrix} , 
    \label{eqn:trunccontrolinput}
\end{equation}
and $P_{\mathbf{N},km} \in \mathbb{R}^{\text{dim}(x^{[k]}) \times \text{dim}(x^{[m]})}$. Note, that in \eqref{eqn:trunccontrolinput}, the control input is truncated because we neglect all the higher-order terms for $j+m > \mathbf{N}$. We will denote the trucation error as 
\begin{equation}
    \epsilon_{trunc}(P_{\mathbf{N}}) =  \dfrac{1}{r_{ii}}\Bigg(\sum\limits_{k=1}^{\mathbf{N}} \sum\limits_{j+m > \mathbf{N}}^{j+m \leq 2\mathbf{N}} x^{[j]} B_{(k,j-1)i} P_{\mathbf{N},km} x^{[m]}\Bigg)
    \label{eqn:truncexp}
\end{equation}

\subsection{Reinforcement learning for the truncated Carleman model}
Similar to \eqref{eqn:diffPsi}, by replacing $\psi$ by $\psi_{\mathbf{N}}$, the update can be written as \vspace{-0.25cm}
\begin{multline}
    \psi_{\mathbf{N}}(t)^T(t)P_{\mathbf{N}(i)}\psi_{\mathbf{N}}(t) -  \psi_{\mathbf{N}}(t)^T(t+T)P_{\mathbf{N}(i)}\psi_{\mathbf{N}}(t+T) = \\ \int_t^{t+T} (x^T Q_1 x + \psi_{\mathbf{N}}^TK_{\mathbf{N}(i)}^TRK_{\mathbf{N}(i)})\psi_{\mathbf{N}} dt.
    \label{eqn:diffPsifinite}
\end{multline}
The LHS is vectorized such that it can be written as $\bar{p}_{\mathbf{N}(i)} \big( \bar{\psi}_{\mathbf{N}}(t) - \bar{\psi}_{\mathbf{N}}(t+T) \big)$, where $\psi_{\mathbf{N}}(t)^T(t)P_{\mathbf{N}(i)}\psi_{\mathbf{N}}(t) = \bar{p}_{\mathbf{N}(i)}\bar{\psi}_{\mathbf{N}}(t)$. Solving \eqref{eqn:diffPsifinite} using real-time measurements of the states and control input (as will be discussed in Section VI), one can obtain $P_{\mathbf{N}(i)}$. Then, the new updated control input $u_z, z=1,..k$, which is linear in $\psi_{\mathbf{N}}$, is computed as 
\begin{multline}
    u_z = -\dfrac{1}{r_{zz}}\Bigg([B_{0z}^T \,\,\, 0] P_{\mathbf{N}}^{(i)}\psi_{\mathbf{N}} + \\ \sum\limits_{k=1}^{\mathbf{N}} \sum\limits_{j+m \leq \mathbf{N}} x^{[j]} B_{(k,j-1)z} P^{(i)}_{\mathbf{N},km} x^{[m]}\Bigg).
    \end{multline}\normalsize
It is important to note that the control input is being forcibly truncated to make the control input linear in $\psi_{\mathbf{N}}$ and independent of the higher-order terms. Next, we discuss the impact of this truncation, and derive sufficient conditions to ensure closed-loop stability. 

\subsection{Stability analysis for truncated approximation}\label{subsec:trunstab}
First, we state Lemma \ref{lemma:stability} for an ideal closed-loop truncated system. Note that an ideal closed-loop truncation is such that the closed-loop system dynamics is linear in $\psi_{\mathbf{N}}$. For example, for $\mathbf{N}=2$, the truncated control input is $u_i=-K_{11,i}\psi^{[1]} - K_{12,i}\psi^{[2]}$, then the ideal truncated closed-loop dynamics is given by
\begin{equation}
    \begin{bmatrix}
    \dot{\psi}^{[1]}\\\dot{\psi}^{[2]}
   \end{bmatrix} = \begin{bmatrix}
    A_{11,cl} & A_{12,cl} \\ 0 & A_{21,cl}
    \end{bmatrix}\begin{bmatrix}
   \psi^{[1]}\\\psi^{[2]}
    \end{bmatrix},
    \label{eqn:idealtruncsys}
\end{equation}
where $A_{11,cl} = A_{11} - \sum_{i=1}^m B_{0i}K_{11,i}$, $A_{12,cl} = A_{12} - \sum_{i=1}^m (B_{0i}K_{12,i} + B_{1,i}\otimes K_{11,i})  $, and $A_{21,cl} = A_{21} - \sum_{i=1}^mB_{(2,0)i}\otimes K_{11,i} $.
\begin{lemma}\label{lemma:stability}
Consider the ideal system $\dot\psi_{\mathbf{N}}= \mathcal{A}_{cl} \psi_{\mathbf{N}}$, where $\psi_{\mathbf{N}}=[(\psi^{[1]})^T,\cdots,(\psi^{[\mathbf{N}])^T}]^T$. Suppose that $\psi^{[1]}(t)\rightarrow 0$ as $t\rightarrow\infty$ for any $\psi^{[k]}(0)$, $k=1,\cdots,\mathbf{N}$. Then $\psi^{[k]}(t)\rightarrow 0, \, \forall k=2,3,..,\mathbf{N}$.
\end{lemma}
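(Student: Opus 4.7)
The plan is to reduce the claim to showing that $\mathcal{A}_{cl}$ is Hurwitz, so that the entire truncated state $\psi_{\mathbf{N}}(t)\to 0$ from any initial condition. First I would observe that the open-loop Carleman matrix $\mathcal{A}$ in \eqref{eqn:infAinfB} is block upper triangular, and that the ideal truncation preserves this structure in the closed loop $\mathcal{A}_{cl}$: this is made explicit for $\mathbf{N}=2$ in \eqref{eqn:idealtruncsys}, where the $(2,1)$ block is zero, and the same check goes through for general $\mathbf{N}$ because $B_{\psi_{\mathbf{N}}}K_{\mathbf{N}}\psi_{\mathbf{N}}$ never produces a term proportional to $x^{[j]}$ with $j<k$ in the $\dot x^{[k]}$ equation. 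Since the eigenvalues of a block-upper-triangular matrix are the union of the eigenvalues of its diagonal blocks, it will suffice to show that each diagonal block $\mathcal{A}_{cl}^{(k,k)}$ is Hurwitz for $k=1,\ldots,\mathbf{N}$.

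Next, I would use the hypothesis to pin down $A_{11,cl}$. Choosing initial conditions $\psi^{[k]}(0)=0$ for all $k\geq 2$, the upper-triangular structure forces $\psi^{[k]}(t)\equiv 0$ for $k\geq 2$, so the $\psi^{[1]}$-equation reduces to $\dot\psi^{[1]}=A_{11,cl}\psi^{[1]}$. The assumption that $\psi^{[1]}(t)\to 0$ for every $\psi^{[1]}(0)$ then forces $A_{11,cl}$ to be Hurwitz.

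The third step is structural. I claim that each diagonal block admits the Kronecker-sum form $\mathcal{A}_{cl}^{(k,k)}=\sum_{j=1}^{k} I^{\otimes(j-1)}\otimes A_{11,cl}\otimes I^{\otimes(k-j)}$. This mirrors the $\mathbf{N}=2$ computation, where \eqref{eqn:idealtruncsys} gives $A_{22,cl}=A_{11,cl}\otimes I+I\otimes A_{11,cl}$, and in general it is obtained by differentiating $x^{[k]}$ along the ideal closed-loop dynamics $\dot x=A_{11,cl}x+(\text{terms of order}\geq 2\text{ in }x)$ and discarding everything of order higher than $x^{[k]}$---which is precisely what the ideal truncation is designed to do. The spectrum of such a Kronecker sum consists of the $k$-fold sums $\lambda_{i_1}+\cdots+\lambda_{i_k}$ of eigenvalues of $A_{11,cl}$, and since $A_{11,cl}$ is Hurwitz by the previous step, each of these sums has strictly negative real part. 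Hence $\mathcal{A}_{cl}^{(k,k)}$ is Hurwitz for every $k$, and together with the block-triangular observation, $\mathcal{A}_{cl}$ itself is Hurwitz. Then $\psi_{\mathbf{N}}(t)\to 0$ from any initial condition, and in particular $\psi^{[k]}(t)\to 0$ for $k=2,\ldots,\mathbf{N}$.

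The hardest step will be the structural one: the paper writes the Kronecker-sum formula only for $\mathbf{N}=2$, and for larger $\mathbf{N}$ the feedback $B_{\psi_{\mathbf{N}}}K_{\mathbf{N}}\psi_{\mathbf{N}}$ generates a growing zoo of cross terms at various Kronecker orders. One has to track carefully which of these pieces are retained in the ideal truncation and verify that the retained diagonal contributions aggregate into exactly the Kronecker-sum expression, while the strictly lower-block positions remain zero so that the block-upper-triangular structure on which the argument depends is not destroyed.
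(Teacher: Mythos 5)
Your proposal follows essentially the same route as the paper: set the higher-order initial conditions to zero to isolate $\dot\psi^{[1]}=A_{11,cl}\psi^{[1]}$ and conclude $A_{11,cl}$ is Hurwitz, then use the block-upper-triangular structure and the Kronecker-sum form of the diagonal blocks (the paper writes $A_{11,cl}\otimes I+I\otimes A_{11,cl}$ for $\mathbf{N}=2$ and says ``similar proof follows'') to get convergence of all $\psi^{[k]}$. The only difference is that you spell out the general-$\mathbf{N}$ Kronecker-sum structure that the paper asserts only for $\mathbf{N}=2$, which is a correct and worthwhile elaboration rather than a different argument.
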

\begin{proof}
Consider the truncated dynamics for $\mathbf{N}=2$ given by
 \eqref{eqn:idealtruncsys}. Since $\psi^{[1]}(t)\rightarrow 0$ for any $\psi^{[k]}(0)$, $k=1,2$, we choose $\psi^{[2]}(0)=0$. Then $\psi^{[1]}(t)\rightarrow 0$, $\forall \psi^{[1]}(0)$, implies that $A_{11,cl}$ must be Hurwitz. Since $A_{21,cl} = A_{11,cl} \otimes I + I \otimes A_{11,cl}$, $A_{21,cl}$ is also Hurwitz, which means that $\psi^{[2]}(t)\rightarrow 0$,  $\forall \psi^{[2]}(0)$. Similar proof follows for $\mathbf{N}>2$.
\end{proof}

We next present a lemma, which is a truncated version of Theorem 3, that holds under certain conditions on $Q_1$.
\begin{lemma}\label{lem:lemma8}
Starting from a stabilizing $(P_0,K_0)$, for a sufficiently large $\lambda_{\text{min}}(Q_1)$, if $P_{\mathbf{N}(i)}$ is updated according to \eqref{eqn:diffPsifinite}, then the closed-loop system $\mathcal{A}_{cl,i}, \forall i \geq 1$ is stable.
\end{lemma}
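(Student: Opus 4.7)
The plan is to follow the template of Theorem~3 with the Lyapunov function $V_i(t)=\psi_{\mathbf{N}}(t)^T P_{\mathbf{N}(i)}\psi_{\mathbf{N}}(t)$, and to differentiate along the closed-loop trajectory produced by the forcibly truncated feedback. Because the applied control $u=-K_{\mathbf{N}(i+1)}\psi_{\mathbf{N}}$ differs from the ``ideal'' feedback $-R^{-1}B_{\psi_{\mathbf{N}}}^T P_{\mathbf{N}(i)}\psi_{\mathbf{N}}$ exactly by the term $\epsilon_{trunc}(P_{\mathbf{N}(i)})$ defined in \eqref{eqn:truncexp}, the derivative of $V_i$ decomposes into two pieces: an ``ideal'' piece that reproduces \eqref{eqn:Vi4} with $\psi$ replaced by $\psi_{\mathbf{N}}$, plus an extra cross term $2\psi_{\mathbf{N}}^T P_{\mathbf{N}(i)} B_{\psi_{\mathbf{N}}}\,\epsilon_{trunc}(P_{\mathbf{N}(i)})$ driven by the mismatch.

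Second, I would estimate this mismatch. By inspection of \eqref{eqn:truncexp}, $\epsilon_{trunc}$ is a finite sum of monomials $x^{[j]}B_{(k,j-1)i}P_{\mathbf{N},km}x^{[m]}$ with $j+m>\mathbf{N}$, so its Euclidean norm is dominated by a polynomial in $\|x\|$ whose lowest-degree term has order $\mathbf{N}+1$. Consequently the perturbation term in $\dot V_i$ is upper-bounded by $c_i\|x\|^{\mathbf{N}+2}$ on any closed ball $\|x\|\le r$, where $c_i$ depends on $\|P_{\mathbf{N}(i)}\|$ and on the Carleman coefficient matrices. Since the monotonicity recalled at the end of Section~IV keeps the iterates $\{P_{\mathbf{N}(i)}\}$ bounded, a common majorant $c^\star$ of the $c_i$ exists, independent of $i$.

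Third, I would compare the two pieces. The ideal piece contains the negative-definite bound $-x^T Q_1 x \le -\lambda_{\min}(Q_1)\|x\|^2$ along with the further nonpositive contributions carried over from \eqref{eqn:Vi4}, while on the ball $\|x\|\le r$ the perturbation is majorized by $c^\star r^{\mathbf{N}}\|x\|^2$. Picking $\lambda_{\min}(Q_1)>c^\star r^{\mathbf{N}}$ therefore yields $\dot V_i<0$ for every nonzero $\psi_{\mathbf{N}}$ inside a sublevel set of $V_i$ contained in the ball, uniformly in $i$. LaSalle's invariance principle then drives $x(t)\to 0$, and Lemma~\ref{lemma:stability} upgrades this convergence to $\psi_{\mathbf{N}}(t)\to 0$, which establishes stability of $\mathcal{A}_{cl,i}$ for every $i\ge 1$.

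The hard step will be the second one: making the bound on $\epsilon_{trunc}$ tight enough and uniform in $i$, and then carving out an invariant sublevel set of $V_i$ that sits inside a Carleman-meaningful ball $\|x\|\le r$. One must verify that the truncated policy iterates themselves stay inside a compact set so a single $c^\star$ majorant is legitimate, and one must quantify how the admissible ball shrinks with the truncation order $\mathbf{N}$, in order to recast ``$\lambda_{\min}(Q_1)$ sufficiently large'' as a concrete numerical threshold linking $Q_1$, $\mathbf{N}$, and the initial perturbation bound.
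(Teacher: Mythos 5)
Your overall template matches the paper's: the same Lyapunov candidate $V_i=\psi_{\mathbf{N}}^T P_{\mathbf{N}(i)}\psi_{\mathbf{N}}$, the same decomposition of $\dot V_i$ into the ideal expression \eqref{eqn:Vi4} plus a cross term driven by $\epsilon_{trunc}$, and the same observation that this cross term is of higher polynomial order and hence dominated on a small enough ball. However, there is a genuine gap in your third step. You propose to absorb the perturbation directly into $-x^TQ_1x\le-\lambda_{\min}(Q_1)\|x\|^2$. But in the truncated system \eqref{eqn:Ncarl} the blocks $\psi^{[2]},\dots,\psi^{[\mathbf{N}]}$ are independent states (Lemma~\ref{lemma:stability} explicitly allows arbitrary $\psi^{[k]}(0)$), so the negative part of the ideal expression is only negative \emph{semi}definite in $\psi_{\mathbf{N}}$: it vanishes on the set $\{\psi^{[1]}=0\}$ while the remaining quadratic terms in \eqref{eqn:Vi4} are merely nonpositive. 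The truncation residual $\epsilon_{trunc}$ in \eqref{eqn:truncexp} contains monomials such as $x^{[2]}B_{(k,1)i}P_{\mathbf{N},k2}x^{[2]}$ that do not vanish on that set, so your bound of the perturbation by $c_i\|x\|^{\mathbf{N}+2}$ (and hence by $c^\star r^{\mathbf{N}}\|x\|^2$) is only valid if one identifies $\psi^{[k]}\equiv x^{[k]}$, i.e., restricts to the Carleman manifold --- which is precisely not the setting in which $\mathcal{A}_{cl,i}$-stability is being asserted. Off that manifold a trajectory with $\psi^{[1]}=0$ but $\psi^{[2]}\neq 0$ defeats the comparison, and the LaSalle step no longer closes.

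The paper avoids this by inserting an intermediate step you skipped: it first proves the \emph{unperturbed} ($\Delta_i=0$) closed loop is asymptotically stable via $\dot V_i\le 0$, LaSalle, and Lemma~\ref{lemma:stability}; it then invokes a converse Lyapunov argument to produce a new function $\bar V_i=\psi_{\mathbf{N}}^T\bar P_{\mathbf{N}}^i\psi_{\mathbf{N}}$ whose unperturbed derivative is bounded by $-\psi_{\mathbf{N}}^T\Gamma\psi_{\mathbf{N}}$, negative definite in the \emph{full} truncated state; only then is the higher-order perturbation added and absorbed via $\|2\psi_{\mathbf{N}}^T\bar P_{\mathbf{N}}^i\delta(\psi_{\mathbf{N}})\|<\gamma\|\psi_{\mathbf{N}}\|^2$ for $\|\psi_{\mathbf{N}}\|<d_1$ and $\gamma<\lambda_{\min}(\Gamma)/2$, as in \eqref{eqn:truncerror}. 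To repair your argument you would need either this converse-Lyapunov detour or an explicit restriction to the manifold $\psi^{[k]}=x^{[k]}$ together with a justification that the truncated dynamics preserve it (they do not, in general). Two secondary points: your appeal to the monotonicity of $\{P_i\}$ for a uniform majorant $c^\star$ borrows a result proved for the untruncated iteration, and the link you draw between $\lambda_{\min}(Q_1)$ and the threshold is closer in spirit to the paper's remark following the proof (that a larger $Q_{\mathbf{N}}$ enlarges the admissible region $d$) than to the mechanism the proof itself uses.
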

\begin{proof}
Consider the Lyapunov function
\begin{equation}
    V_i(t) =  \psi_{\mathbf{N}}(t)^TP_{\mathbf{N}}^i \psi_{\mathbf{N}}(t).
\end{equation}
Differentiating with respect to time $t$, we have
\begin{multline}
   \dot{V}_i(t) =  \psi_{\mathbf{N}}^T \Bigg( P_{\mathbf{N}(i)} (\mathcal{A}_{\mathbf{N}} - B_{\psi_{\mathbf{N}}} R^{-1}B_{\psi_{\mathbf{N}}}^T P_{\mathbf{N}(i)}) \\ + (\mathcal{A}_{\mathbf{N}} - B_{\psi_{\mathbf{N}}} R^{-1}B_{\psi_{\mathbf{N}}}^T P_{\mathbf{N}(i)})^T  P_{\mathbf{N}(i)}\Bigg) \psi_{\mathbf{N}},
\end{multline}
\normalsize 
which is further rewritten as
\begin{multline}
    \dot{V}_i(t) =  \psi_{\mathbf{N}}^T \Bigg( P_{\mathbf{N}(i)} (\mathcal{A}_{\mathbf{N}} - B_{\psi_{\mathbf{N}}} R^{-1}B_{\psi_{\mathbf{N}}}^TP_{\mathbf{N}(i-1)} \\ + (\mathcal{A}_{\mathbf{N}} - B_{\psi_{\mathbf{N}-1}} R^{-1}B_{\psi_{\mathbf{N}}}^T{P_{\mathbf{N}(i-1)}})^T {P_\mathbf{N}(i)} \Bigg) \psi_{\mathbf{N}}(t)\\
    +  \psi_{\mathbf{N}}(t)^T \Bigg( P_{\mathbf{N}(i)} \big(  B_{\psi_{\mathbf{N}}} R^{-1}B_{\psi_{\mathbf{N}}}^T (P_{\mathbf{N}(i-1)} - P_{\mathbf{N}(i)}) \big) \\ +   \big( (P_{\mathbf{N}(i-1)} - P_{\mathbf{N}(i)}) B_{\psi_{\mathbf{N}}} R^{-1}B_{\psi_{\mathbf{N}}}^T  \big)P_{\mathbf{N}(i)} \Bigg)  \psi_{\mathbf{N}}(t).
\end{multline}\normalsize
 Also,  we have
\begin{equation}
    R^{-1}\begin{bmatrix}
    B_1\\B_{2}  \psi_{\mathbf{N-1}}
    \end{bmatrix}^TP_{\mathbf{N}(i)} \psi_{\mathbf{N}} = K_{i+1}  \psi_{\mathbf{N}}  + \epsilon_{trunc}(P_{\mathbf{N}(i)}) .
\end{equation}
Recalling from \cite{motee} that
\begin{equation}
    \mathcal{A}_{\mathbf{N},cl}^T P_{\mathbf{N}(i)} + P_{\mathbf{N}(i)} \mathcal{A}_{\mathbf{N}cl} = - \big( K_{\mathbf{N}(i)}^TRK_{\mathbf{N}(i)} + Q),
\end{equation}
and  substituting \vspace{-0.1cm}
\begin{multline}
   K_{\mathbf{N}(i)}\psi_{\mathbf{N}}(t)  =  R^{-1}B_{\psi_{\mathbf{N}}}^TP_{\mathbf{N},(i-1)} \psi_{\mathbf{N}}(t)  - \epsilon_{trunc}(P_{\mathbf{N}(i-1)}),
\end{multline}
 we have
\begin{multline}
    \dot{V}_i(t) = -\psi_{\mathbf{N}}(t)^T \Bigg( (P_{\mathbf{N}(i-1)} - P_{\mathbf{N}(i)}) B_{\psi_{\mathbf{N}}} R^{-1}\\B_{\psi_{\mathbf{N}}}^T (P_{\mathbf{N}(i-1)} - P_{\mathbf{N}(i)})  + Q \\+  P_{\mathbf{N}(i)} B_{\psi_{\mathbf{N}}} R^{-1}B_{\psi_{\mathbf{N}}}^T P_{\mathbf{N}(i)}\Bigg) \psi_{\mathbf{N}}(t) + \Delta_i(\psi_{\mathbf{N}}),
\end{multline}
\normalsize
where $\Delta_i (\psi_{\mathbf{N}})$ consists of the higher-order terms as:
\begin{equation}
    \Delta_i(\psi_{\mathbf{N}}) = 2\psi_{\mathbf{N}}^T (P_{\mathbf{N}(i-1)} -  P_{\mathbf{N}(i)})B_{\psi_{\mathbf{N}-1}}\epsilon_{trunc}(P_{\mathbf{N}(i-1)}).
\end{equation}\normalsize

Consider the ideal case, where the perturbation term due to the truncation $\Delta_i(\psi_{\mathbf{N}}) = 0$. Since $Q_{\mathbf{N}}$ \eqref{eqn:carlobj} is designed such that $\psi_{\mathbf{N}}^T Q_{\mathbf{N}}\psi_{\mathbf{N}} = (\psi^{[1]})^T Q_1 \psi^{[1]}$, it follows that $\dot{V}_i \leq 0$ and thus $\psi^{[1]}(t)\rightarrow 0$. From Lemma \ref{lemma:stability}, we further conclude that for the unperturbed system, $\psi_{\mathbf{N}}(t)\rightarrow 0$.
Hence, $\psi_{\mathbf{N}}=0$ is asymptotically stable. It then follows from standard control theory that for a given positive definite matrix $\Gamma$, there exists a Lyapunov function $\bar{V}_i = \psi_{\mathbf{N}}^T \bar{P}_{\mathbf{N}}^i \psi_{\mathbf{N}}$ for the unperturbed system  such that $\dot{\bar{V}}_i \leq -\psi_{\mathbf{N}}^T \Gamma \psi_{\mathbf{N}}<0$. 


Next, we calculate derivative of $\bar{V}_i$ for the perturbed case $\Delta_i \neq 0$. We obtain $\dot{\bar{V}}_i$ as
\begin{equation}
    \dot{\bar{V}}_i = \psi_{\mathbf{N}}^T \bar{P}_{\mathbf{N}}^i \big(\mathcal{A}_{cl} \psi_{\mathbf{N}} - \delta (\psi_{\mathbf{N}}) \big) +  \big(\mathcal{A}_{cl} \psi_{\mathbf{N}} - \delta (\psi_{\mathbf{N}}) \big)^T \bar{P}_{\mathbf{N}}^i \psi_{\mathbf{N}},
    \label{eqn:Vibar}
\end{equation}
where $\delta(\psi_{\mathbf{N}}) = B(\psi_{\mathbf{N}})\epsilon_{trunc}(P_{\mathbf{N}}^i)$. We further get
\begin{equation}
\begin{split}
    \dot{\bar{V}}_i & = \psi_{\mathbf{N}}^T(\bar{P}_{\mathbf{N}}^i \mathcal{A}_{cl} + \mathcal{A}^T_{cl}\bar{P}_{\mathbf{N}}^i) \psi_{\mathbf{N}}  - 2 \psi_{\mathbf{N}}^T\bar{P}_{\mathbf{N}}^i\delta(\psi_{\mathbf{N}})\\
   & \leq -\psi_{\mathbf{N}}^T \Gamma \psi_{\mathbf{N}} - 2 \psi_{\mathbf{N}}^T\bar{P}_{\mathbf{N}}^i \delta(\psi_{\mathbf{N}}).
    \end{split}
    \label{eqn:trunV2}
\end{equation}
Note that the perturbation term $\mathcal{O}(2\psi_{\mathbf{N}}^T\bar{P}_{\mathbf{N}}^i\delta(\psi_{\mathbf{N}})) > $ is higher-order in $\psi_{\mathbf{N}}$ than the $\mathcal{O}(\psi_{\mathbf{N}}^2)$ term. Therefore, for any $\gamma >0$, there exists $d_1>0$, such that $\|2\psi_{\mathbf{N}}^T\bar{P}_{\mathbf{N}}^i\delta(\psi_{\mathbf{N}})\|< \gamma \|\psi_{\mathbf{N}}\|^2$, $\forall \|\psi_{\mathbf{N}}\|<d_1$. It then follows that $\forall \|\psi_{\mathbf{N}}\|<d_1$,
\begin{equation}
    \dot {\bar V}_i \leq -\psi_{\mathbf{N}}^T \Gamma \psi_{\mathbf{N}} +2\gamma \|\psi_{\mathbf{N}}\|^2\leq -\psi_{\mathbf{N}}^T (\lambda_{min}(\Gamma) - 2\gamma) \psi_{\mathbf{N}}.
    \label{eqn:truncerror}
\end{equation}
Choosing $\gamma < 1/2\lambda_{min}(\Gamma)$ ensures $\dot {\bar V}_i$ is negative definite. We conclude that $\psi_{\mathbf{N}}=0$ is asymptotically stable.\end{proof}


Note that this $d$ depends on the value of $Q_{\mathbf{N}}$ and $R$. Particularly, from \eqref{eqn:trunV2}, we can see that for a large enough $Q_{\mathbf{N}}$ and $R$, we can achieve a larger $d$. In our simulation studies we see that a small enough initial condition and a large enough $Q_{\mathbf{N}}$ guarantees stability. We compare the stability and convergence for decreasing $Q_{\mathbf{N}}$ in Sec. \ref{sec:results}.

\subsection{Stability analysis for infinite-dimensional system}
Lemma~\ref{lem:lemma8} applies to the truncated system~\eqref{eqn:Ncarl} with the learned control $u$ also truncated to the $\mathbf{N}$th order. However, the original system~\eqref{eqn:infcarl} is an infinite-dimensional system. Therefore, there is a discrepancy between the trajectories of~\eqref{eqn:Ncarl} and~\eqref{eqn:infcarl}. In this section, we show that the $\mathbf{N}$th-order control $u$ learned from the trajectory of~\eqref{eqn:infcarl} is still asymptotically stabilizing, provided that $x_0$ is small. Specifically, we  establish that the discrepancy between~\eqref{eqn:Ncarl} and~\eqref{eqn:infcarl} is small if $\mathbf{N}$ is chosen sufficiently large and $x_0$ is sufficiently small. Then the $\mathbf{N}$th-order control $u$ learned based on~\eqref{eqn:infcarl} still stabilizes~\eqref{eqn:Ncarl}. Finally, in Theorem 10, we establish that the infinite dimensional system~\eqref{eqn:infcarl} is asymptotically stable with the learned $u$.


To examine the discrenpancy between the trajectories of~\eqref{eqn:Ncarl} and~\eqref{eqn:infcarl}, we consider two assumptions from~\cite{carlerror}.
We assume that the expansion coefficients $A_{1k}$ satisfy the exponential decay property that $|A_{1k}| \leq D_0 H^{-k}, \, k \geq 1$, for positive constants $D_0$ and $H$. We also assume that $A_{11}$ is a stable matrix. Next, we obtain a generalized version of Theorem 3.5 in~\cite{carlerror} valid for $||\psi_{\mathbf{N}}^{[k]} - x^{[k]}||, \, \forall k \geq 1$.
\begin{lemma}\label{th:carlerrorbound}
Given that $||x_0||_2 < H \epsilon_0 \lambda_{min}(S)$, $|A_{1k}| \leq D_0 H^{-k}, \, k \geq 1$, and $A_{11}$ is a Hurwitz matrix such that $A_{11}^TS + SA_{11} \leq \mu_0$,  for a positive definite matrix $S$ and $\mu_0 >0$, the error bound for each Carleman state satisfies $\psi_{\mathbf{N}}^{[1]},\hdots \,, \psi_{\mathbf{N}}^{[N]}$
\begin{equation}
    ||\psi_{\mathbf{N}}^{[k]} - x^{[k]}|| 
    \leq \dfrac{k\epsilon_0^k||x_0||_2H^{k-1}}{\omega\lambda_{min}(S)(1 - \epsilon_0)} \Bigg( \dfrac{||x_0||_2}{H\epsilon_0 \lambda_{min}(S)} \Bigg)^{\mathbf{N}}
    \label{eqn:carlerror}
\end{equation}
where $\omega = H \kappa/CD_0$, $\epsilon_0 = \omega/(1+\omega) $ and $C>0$.
\end{lemma}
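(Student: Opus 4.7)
The plan is to adapt Theorem 3.5 of \cite{carlerror}, which handles only $\|\psi_{\mathbf{N}}^{[1]} - x\|$, and propagate the argument through every level of the Carleman hierarchy. Set $\eta_k(t) := \psi_{\mathbf{N}}^{[k]}(t) - x^{[k]}(t)$. From the infinite-dimensional dynamics one has $\dot{x}^{[k]} = \sum_{j \geq k} A_{kj} x^{[j]}$, while the truncated hierarchy gives $\dot{\psi}_{\mathbf{N}}^{[k]} = \sum_{j=k}^{\mathbf{N}} A_{kj} \psi_{\mathbf{N}}^{[j]}$. Subtracting, the error obeys
\begin{equation}
\dot{\eta}_k = \sum_{j=k}^{\mathbf{N}} A_{kj}\,\eta_j \;-\; \sum_{j > \mathbf{N}} A_{kj}\,x^{[j]},
\end{equation}
where the second sum is the forcing due to truncation and the first is an upper-triangular linear coupling within the hierarchy.

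The first step is to bound $\|x(t)\|_2$ on $[0,\infty)$ using the Lyapunov function $V(x) = x^T S x$. The hypothesis $A_{11}^T S + S A_{11} \le \mu_0$ controls the linear part, while the decay $|A_{1k}| \le D_0 H^{-k}$ makes the remainder $\sum_{k\ge 2} A_{1k} x^{[k]}$ a convergent series that is $\mathcal O(\|x\|^2)$ whenever $\|x\|/H < 1$. The smallness condition $\|x_0\|_2 < H\epsilon_0 \lambda_{\min}(S)$ is exactly what is needed so that, in the resulting Lyapunov inequality, the quadratic dissipation dominates the higher-order terms; this gives a uniform bound $\|x(t)\|_2 \le \|x_0\|_2$ (or a closely related constant) for all $t \ge 0$, and also makes the series $\sum_{j > \mathbf{N}} A_{kj} x^{[j]}$ absolutely summable.

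Next I would bound the truncation forcing. Using the identity \eqref{eqn:AijBijset}, one has the elementary estimate $|A_{kj}| \le k D_0 H^{-(j-k+1)}$, so
\begin{equation}
\Big\|\sum_{j>\mathbf{N}} A_{kj}\,x^{[j]}\Big\|
\;\le\; k D_0 \sum_{j>\mathbf{N}} H^{-(j-k+1)} \|x\|^{j}
\;\le\; C_k\,\bigl(\|x\|/H\bigr)^{\mathbf{N}},
\end{equation}
where the geometric tail produces exactly the factor $\bigl(\|x_0\|_2/(H\epsilon_0\lambda_{\min}(S))\bigr)^{\mathbf{N}}$ appearing in \eqref{eqn:carlerror}. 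Combined with the $A_{11}^T S + S A_{11} \le \mu_0$ bound, a standard Grönwall argument on a weighted Lyapunov functional $W = \sum_{k=1}^{\mathbf{N}} \alpha_k\, \eta_k^T (S^{\otimes k}) \eta_k$ — with weights $\alpha_k$ chosen to dominate the upper-triangular off-diagonal blocks $A_{kj}$, $j>k$ — converts the forcing bound into the claimed estimate for each $\eta_k$. The factor $k$ in \eqref{eqn:carlerror} comes out of the Kronecker-sum structure of $A_{kk}$ and from the telescoping identity $a^{\otimes k} - b^{\otimes k} = \sum_{j=0}^{k-1} a^{\otimes j}\otimes(a-b)\otimes b^{\otimes(k-1-j)}$ which is implicit in the definition of $\eta_k$.

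The main obstacle will be the coupled, upper-triangular nature of the error system: a naive chain of Grönwall estimates, solving for $\eta_{\mathbf{N}}$ first and back-substituting, produces multiplicative constants that blow up with $\mathbf{N}$ and destroy the clean exponent $\mathbf{N}$ in the bound. The right move is a simultaneous Lyapunov argument with carefully geometrically decaying weights $\alpha_k \sim \omega^k$ — this is exactly where the constants $\omega = H\kappa/(CD_0)$ and $\epsilon_0 = \omega/(1+\omega)$ arise, as the fixed-point ratio that makes $W$ a genuine Lyapunov function for the coupled error system. Once the weighted Lyapunov functional is contracting, extracting the per-level bound for $\|\eta_k\|$ and matching the prefactor $k\epsilon_0^k \|x_0\|_2 H^{k-1}/(\omega\lambda_{\min}(S)(1-\epsilon_0))$ is a routine (if tedious) comparison with Theorem 3.5 of \cite{carlerror} reduced to $k=1$.
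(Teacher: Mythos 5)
Your setup matches the paper's: the error dynamics $\dot\eta_k=\sum_{j=k}^{\mathbf{N}}A_{kj}\eta_j-\sum_{j>\mathbf{N}}A_{kj}x^{[j]}$, the Lyapunov bound on $\|x(t)\|_2$ via $x^TSx$ under $A_{11}^TS+SA_{11}\le\mu_0$, and the geometric tail estimate on the truncation forcing are all exactly the first half of the argument. The divergence — and the gap — is in how you close the coupled upper-triangular system. You assert that the ``naive'' backward recursion (solve for $\eta_{\mathbf{N}}$ first, back-substitute) produces constants that blow up with $\mathbf{N}$, and you replace it with an aggregate weighted functional $W=\sum_k\alpha_k\,\eta_k^T S^{\otimes k}\eta_k$. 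That assertion is wrong: the backward recursion is precisely what the paper does (via the variation-of-constants formula with $\|e^{A_{kk}t}\|\le Ce^{-k\kappa t}$ and the chain of integral inequalities adapted from (5.54)--(5.57) of the cited reference), and it is this recursion whose fixed point produces $\omega=H\kappa/(CD_0)$ and $\epsilon_0=\omega/(1+\omega)$ without any blow-up — the geometric decay $|A_{kl}|\lesssim H^{k-l}$ against the decay rates $e^{-l\kappa t}$ is exactly what keeps the constants under control.

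The deeper problem with your substitute is that the decisive step is only asserted. From a single scalar functional $W$ you would recover per-level bounds only up to factors $\alpha_k^{-1/2}$, and matching the specific prefactor $k\epsilon_0^k\|x_0\|_2H^{k-1}/(\omega\lambda_{\min}(S)(1-\epsilon_0))$ for each individual $k$ is the entire quantitative content of the lemma; calling it ``routine'' does not discharge it. A telltale sign that your route would not land on the stated bound is that the constants $C$ and $\kappa$ (the semigroup bound $\|e^{A_{kk}t}\|\le Ce^{-k\kappa t}$), which define $\omega$ and hence $\epsilon_0$, enter naturally through the Duhamel/recursion argument but appear nowhere in your functional $W$, which would instead be governed by $\mu_0$ and the spectrum of $S$. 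To repair the proof, drop the aggregate functional, write each $\eta_k$ via variation of constants against $e^{A_{kk}(t-s)}$, and solve the resulting integral inequalities downward from $k=\mathbf{N}$; the $(\|x_0\|_2/(H\epsilon_0\lambda_{\min}(S)))^{\mathbf{N}}$ factor then comes from the forcing at the top level and propagates unchanged through the recursion.
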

\begin{proof}
Since this is not the focus of our paper, we briefly present the proof by modifying the steps in \cite{carlerror} for a general case when $A_{11}$ is Hurwitz. Replacing $||x(t)||_2$ by $x(t)^T S x(t)$ in (5.41) of \cite{carlerror}, we get the following bound 
\begin{equation}
\begin{split}
    \dfrac{d (x^T S x)}{dt} &\leq -2\Bigg(\mu_0 - \dfrac{||S||D_0||x_0||_2}{H(H-||x_0||_2)}\Bigg)||x||_2^2\\
    &\leq -\dfrac{2}{\lambda_{max}(S)}\Bigg(\mu_0 - \dfrac{||S||D_0||x_0||_2}{H(H-||x_0||_2)}\Bigg) (x^TSx).
\end{split}
\end{equation}
Therefore, we have the bound on $||x(t)||_2$ as follows
{\footnotesize
\begin{equation}
    ||x(t)||_2 \leq \dfrac{||x_0||_2}{\lambda_{min}(S)} \exp{\Bigg(-\dfrac{1}{\lambda_{max}(S)} \Big( \mu_0 - \dfrac{||S||D_0||x_0||_2}{H(H-||x_0||_2)}\Big) t \Bigg).}
\end{equation}}
\normalsize
Next, we use the fact that $A_{kk}$ is stable (since $A_{11}$ is stable), and $\text{Re}\{\lambda(A_{11})\} < -\kappa$ for $\kappa>0$ to obtain
\begin{equation}
    ||\exp(A_{kk}t)|| \leq C_k e^{-k \kappa t} \leq C e^{-k \kappa t},
\end{equation}
where $C = \max (C_1, C_2, \hdots, C_{\mathbf{N}})$, and $\text{Re} \{\lambda(A_{kk})  \} = k\text{Re}\{\lambda(A_{11})\} \leq - k\kappa$. 
Therefore, (5.54) in \cite{carlerror} is converted to \scriptsize
\begin{equation}
\begin{aligned}
    ||\eta_{k,\mathbf{N}}||_{\infty} &\leq \dfrac{CD_0k}{H} \sum_{l=k+1}^{\mathbf{N}}H^{k-l} \int_{t_0}^t e^{-k\kappa(t-s)} ||\eta_{l,\mathbf{N}}||_{\infty} ds\\
    &+ \dfrac{CD_0k \bigg( \dfrac{||x_0||_2}{\lambda_{min}(S)}\bigg)^{\mathbf{N}+1}}{H\bigg(H - \dfrac{||x_0||_2}{\lambda_{min}(S)}\bigg)} \int_{t_0}^t e^{-k \kappa (t-s)}H^{k-\mathbf{N}}ds
    \end{aligned}    
\end{equation}\normalsize
Therefore, making similar modifications to steps (5.55-5.57) in \cite{carlerror}, we get the bound in \eqref{eqn:carlerror}. 
\end{proof}
From Lemma \ref{th:carlerrorbound}, we see that when $||x_0||_2 < H\epsilon_0 \lambda_{min}(S)$,  the error goes to zero for $\mathbf{N}\rightarrow \infty$.  Thus, for a sufficiently large $\mathbf{N}$ and $||x_0||_2 < H\epsilon_0 \lambda_{min}(S)$, the vector $\psi_{\mathbf{N}}$ is close to the states of the actual system $\psi^x_{\mathbf{N}} = [x^T \, x^{[2]T}\,\hdots \, x^{[\mathbf{N}]T}]^T$ i.e., $||\psi_{\mathbf{N}} - \psi^x_{\mathbf{N}}|| \sim \mathcal{O}(\zeta)$, where $\zeta$ is a small positive quantity. 

In Section~\ref{subsec:trunstab}, we have formulated our theory based on the availability of $\psi_{\mathbf{N}}$. However in practice, we have access to $\psi^x_{\mathbf{N}}$. Since $||\psi_{\mathbf{N}} - \psi^x_{\mathbf{N}}|| \sim \mathcal{O}(\zeta)$, it follows from Theorem 2 in \cite{sayakbai} that the learned cost matrix $P^x_{\mathbf{N}} = P_{\mathbf{N}} + \mathcal{O}(\zeta)$, where $P_{\mathbf{N}}$ is the cost matrix if we were to use $\psi_{\mathbf{N}}$ for learning. Therefore, \eqref{eqn:Vibar} gets modified to
\begin{multline}
    \dot{\bar{V}}_i^{\epsilon} = \psi_{\mathbf{N}}^T {\bar{P}^{x (i)}_{\mathbf{N}}} (\mathcal{A}_{\mathbf{N}}\psi_{\mathbf{N}} - B_{\psi_{\mathbf{N}}}R^{-1}B_{\psi^x_{\mathbf{N}}}{\bar{P}^{x (i)}_{\mathbf{N}}}\psi^x_{\mathbf{N}}) \\ + (\mathcal{A}_{\mathbf{N}}\psi_{\mathbf{N}} - B_{\psi_{\mathbf{N}}}R^{-1}B_{\psi^x_{\mathbf{N}}}{\bar{P}^{x (i)}_{\mathbf{N}}}{\psi^x}_{\mathbf{N}})^T{\bar{P}^{x (i)}_{\mathbf{N}}}\psi_{\mathbf{N}}.
\end{multline}
Replacing $\psi^x_{\mathbf{N}} = \psi_{\mathbf{N}} + \mathcal{O}(\zeta)$ and ${\bar{P}^{x (i)}_{\mathbf{N}}} = \bar{P}^i_{\mathbf{N}} + \mathcal{O}(\zeta)$, and further ignoring higher-order terms $\mathcal{O}(\geq \zeta^2)$ and simplifying, we obtain
\begin{equation}
    \dot{\bar{V}}_i^{\epsilon} \leq -\psi_{\mathbf{N}}^T \Gamma \psi_{\mathbf{N}} +2\gamma \|\psi_{\mathbf{N}}\|^2 + \underbrace{\psi_{\mathbf{N}}^T\bar{P}^i_{\mathbf{N}}B_{\psi_{\mathbf{N}}}^T \mathcal{O}(\zeta)}_{<\beta \mathcal{O}(\zeta) ||\psi_{\mathbf{N}}||^2  },
    \end{equation}
    where $\beta > 0$, which means
\begin{equation}
     \dot{\bar{V}}_i^{\epsilon} \leq  -\psi_{\mathbf{N}}^T (\lambda_{min}(\Gamma) - 2\gamma - \beta \mathcal{O}(\zeta) ) \psi_{\mathbf{N}}.
\end{equation}
We then conclude that given a large $\lambda_{min}(\Gamma)$,  $\dot{\bar{V}}_i^{\zeta} < 0$ and thus~\eqref{eqn:Ncarl} is asymptotically stable. 

\begin{theorem}
    Given that the truncated closed-loop system~\eqref{eqn:Ncarl} is locally asymptotically stable, the infinite dimensional nonlinear system~\eqref{eqn:infcarl} is also locally asymptotically stable.
\end{theorem}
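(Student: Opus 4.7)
The plan is to reduce the claim to Lyapunov's indirect method. The key observation is that the closed-loop Jacobian at the origin of the nonlinear plant~\eqref{eqn:nonlinear} under the learned feedback $u=-K_{\mathbf{N}}\psi^x_{\mathbf{N}}$ coincides with the upper-left diagonal block $A_{11,cl}$ of the truncated closed-loop matrix $\mathcal{A}_{\mathbf{N},cl}$. I would therefore (i) extract Hurwitzness of $A_{11,cl}$ from the hypothesis that~\eqref{eqn:Ncarl} is locally asymptotically stable, and (ii) transfer this property to the infinite-dimensional system~\eqref{eqn:infcarl} via the linearization principle.

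For step (i), I would exploit the block upper-triangular structure of $\mathcal{A}_{\mathbf{N},cl}$ exhibited in~\eqref{eqn:idealtruncsys} and used in the proof of Lemma~\ref{lemma:stability}. Because of this structure, the subspace $\{(\psi^{[1]},0,\dots,0)\}$ is invariant under the truncated dynamics, and restricting~\eqref{eqn:Ncarl} to it reduces the dynamics to $\dot\psi^{[1]}=A_{11,cl}\psi^{[1]}$. Local asymptotic stability of the truncated system therefore forces $A_{11,cl}$ to be Hurwitz. For step (ii), I would Taylor-expand the closed-loop nonlinear vector field about $x=0$: every Kronecker power $x^{[k]}$ with $k\geq 2$ appearing inside $\psi^x_{\mathbf{N}}$ is $\mathcal{O}(\|x\|^2)$, and every state-dependent entry of $B_{\psi}$ and $f$ contributes only higher-order terms, so the Jacobian of the closed-loop nonlinear plant at $x=0$ equals $A_{11,cl}$. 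Hurwitzness of $A_{11,cl}$ combined with Lyapunov's indirect method then yields local asymptotic stability of $x=0$ for~\eqref{eqn:nonlinear}. Since each component $x^{[k]}$ of $\psi$ is a polynomial in $x$ vanishing at $x=0$, convergence $x(t)\to 0$ implies $\psi(t)\to 0$ in the Carleman norm, which is exactly the asserted local asymptotic stability of~\eqref{eqn:infcarl}.

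The main obstacle is handling the infinite tail $\{x^{[k]}\}_{k>\mathbf{N}}$ rigorously: one must confirm that convergence of $x(t)$ to zero translates into summable convergence of the full $\psi(t)$ rather than merely componentwise convergence. For initial conditions inside the basin $\|x_0\|_2<H\epsilon_0\lambda_{\min}(S)$ identified in Lemma~\ref{th:carlerrorbound}, the tail is dominated by a geometric series in $\|x\|$ that shrinks to zero as $x(t)\to 0$; this step, although routine, is precisely where the Carleman error estimates preceding the theorem statement are genuinely needed to certify that a finite-$\mathbf{N}$ design yields infinite-dimensional stability.
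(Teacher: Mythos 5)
Your proposal is correct and follows essentially the same route as the paper's (much terser) proof: extract Hurwitzness of $A_{11,cl}$ from the stability of the truncated system, then invoke the linearization principle to conclude local asymptotic stability of the nonlinear/infinite-dimensional system, for which the paper simply cites~\cite{localstab}. Your additional care with the invariant subspace argument for step (i) and with the convergence of the infinite tail $\{x^{[k]}\}_{k>\mathbf{N}}$ is a welcome elaboration of details the paper leaves implicit.
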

\begin{proof}
    Since the truncated closed-loop system is locally asymptotically stable, 
    the closed-loop matrix $A_{11,cl}$ in (53) is Hurwitz. Since the stability of the infinite dimensional system is also dictated by $A_{11,cl}$, the infinite dimensional system is also locally asymptotically stable~\cite{localstab}.
\end{proof}
Fig. \ref{fig:flowdiagram} summarizes the important steps in proving the stability of the learnt truncated controller for the infinite dimensional system. 
\begin{figure}[h]
  \centering
  \includegraphics[width=\linewidth]{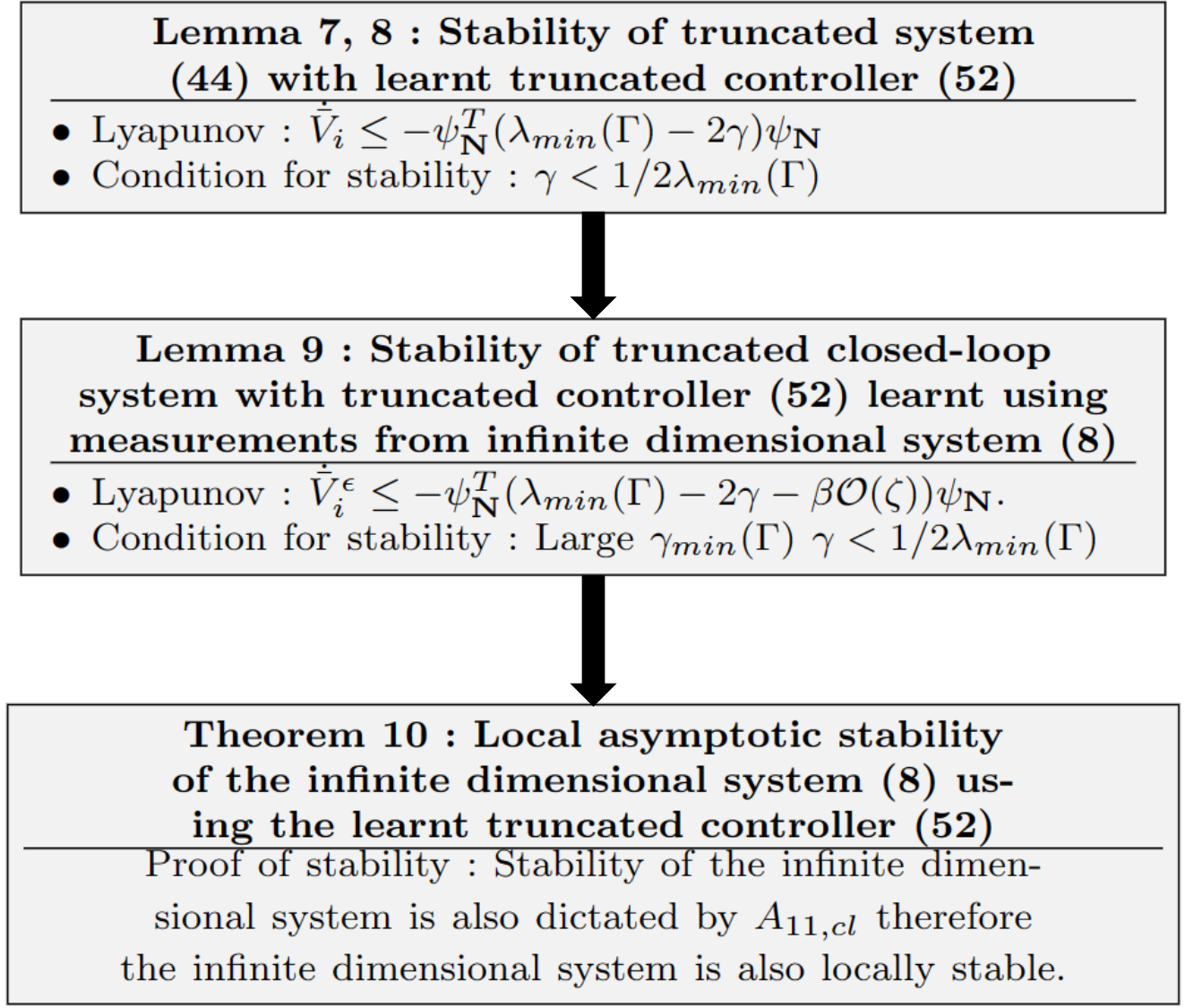}
\caption{Flowchart for stability proof of truncated controller}
   \label{fig:flowdiagram}
\end{figure}
\section{Online implementation}
In this section, we present the steps for online implementation of the proposed control design. We propose two variants for the implementation, namely, \textit{on-policy}, and \textit{off-policy}. The on-policy method involves learning the controller while simultaneously implementing it in the system at each iteration, whereas the off-policy method first learns the controller until a certain number of iterations and thereafter implements it following the final iteration. 

\subsection{On-policy implementation}

We denote $T$ as the duration after which the matrix $P_i$ is updated. Therefore, from time $[t_0 + iT , t_0+(i+1)T )]$, the control input corresponding to $P_i, i=0,1,\hdots,N$ is actuated. Denote the sampling time as $\delta t$. 
Let us denote $\Psi_{\mathbf{N}(i)}$ and $Y_{\mathbf{N}(i)}$ as the following matrices \scriptsize
\begin{multline}
    \Psi_i = \begin{bmatrix}
    \bar{\psi}_{\mathbf{N}}(t_0 + iT) - \bar{\psi}_{\mathbf{N}}(t_0 + iT + \delta t) \\
    \vdots \\
    \bar{\psi}_{\mathbf{N}}(t_0 + (i+1)T - \delta t) - \bar{\psi}_{\mathbf{N}}(t_0 + (i+1)T )
    \end{bmatrix}, \\
    Y_i \hspace{-0.1cm} = \hspace{-0.1cm} \begin{bmatrix}
     \int\limits_{t_0 + iT}^{t_0 + iT + \delta t} \Big( x(s)^T Q_1 x(s)  + \psi_{\mathbf{N}}(s)^TK_{\mathbf{N}}^TRK_{\mathbf{N}}\psi_{\mathbf{N}}(s) \Big) ds \\
     \vdots \\
     \int\limits_{t_0 + (i+1)T- \delta t}^{t_0 + (i+1)T } \Big( x(s)^T Q_1 x(s)  + \psi_{\mathbf{N}}(s)^TK_{\mathbf{N}}^TRK_{\mathbf{N}}\psi_{\mathbf{N}}(s) \Big) ds
    \end{bmatrix}.
\end{multline}\normalsize
Therefore, the following must hold
\begin{equation}
   \bar{p}_{\mathbf{N}(i)}^T \Psi_i = Y_i,
   \label{eqn:ppsi=y}
\end{equation}
where $\bar{p}_{\mathbf{N}(i)}$ is such that $\bar{p}_{\mathbf{N}(i)}\bar{\psi}_{\mathbf{N}} = \psi_{\mathbf{N}}^T P_{\mathbf{N}(i)} \psi_{\mathbf{N}}$.
We can obtain the least square solution of \eqref{eqn:ppsi=y} in terms of the Moore-Penrose pseudo inverse as $\bar{p}_{\mathbf{N}(i)}^T = Y_i \Psi_i^{\dagger}$, where $\Psi_i^{\dagger} = \Psi_i^T(\Psi_i \Psi_i^T)^{-1}$. To excite all system modes, and to prevent ill-conditioning of matrices in computing the least-squares solutions, we input a persistently exciting noise $u_{\text{noise}}(t)$. In that case, \eqref{eqn:psiPpsi} is transformed as 
\begin{multline}
    \dfrac{d \big(\psi_{\mathbf{N}}^TP_{\mathbf{N}(i)}\psi_{\mathbf{N}} \big)}{dt} = \psi_{\mathbf{N}}^T (\mathcal{A}_{cl,i-1}^TP_{\mathbf{N}(i)} +  P_{\mathbf{N}(i)}\mathcal{A}_{cl,i-1})\psi_{\mathbf{N}} \\ + 2 \psi_{\mathbf{N}}^T P_{\mathbf{N}(i)} B_{\psi_{\mathbf{N}-1}}u_{\text{noise}}.
\end{multline}
We can represent $\psi_{\mathbf{N}}^T P_{\mathbf{N}(i)} = \bar{p}_{\mathbf{N}(i)}^T \beta (\psi_{\mathbf{N}})$, where $\beta(\psi_{\mathbf{N}})$ is a matrix of appropriate dimension formed from $\psi_{\mathbf{N}}$. Therefore, $\Psi_i$ can be modified to $\Psi_{i\text{(noise)}}$ as 
\begin{equation}
\Psi_{i\text{(noise)}} = \Psi_i + \begin{bmatrix}
\int_{t_0 + iT}^{t_0 + iT + \delta t} 2\beta(\psi_{\mathbf{N}})B_{\psi_{\mathbf{N}-1}}u_{\text{noise}}\\
\vdots \\
\int_{t_0 + (i+1)T + \delta t}^{t_0 + (i+1)T + \delta t} 2\beta(\psi_{\mathbf{N}})B_{\psi_{\mathbf{N}-1}}u_{\text{noise}} \nonumber
\end{bmatrix}.
\end{equation}
Algorithm \ref{alg:complete} summarizes the steps for on-policy implementation. We drop the truncation notation $\mathbf{N}$ from $P_i$ for simplicity.

\begin{algorithm}[h]
  \begin{algorithmic}[1]
 \STATE Start with an initializing $P_0$ such that (1) is stable.
 \STATE \textbf{For} $i=0,..,L$
 \STATE \, Actuate input \footnotesize $  u = - R^{-1}(B_1^T P_{11} x + B_1^T P_{12} \psi_{\mathbf{1-N}} + \psi_{\mathbf{N-1}}^T B_2^TP_{21} x ) + u_{
 \text{noise}}, \forall t \in [t_0 + iT , t_0+(i+1)T )] $\normalsize
 \STATE \, Form $\psi_{\mathbf{N}}$ from $x$, use it to form $\Psi_{i(\text{noise})}$, $Y_i$
 \STATE \, Update $P_{i+1}$ from the solution of $\bar{p}^T_{i+1} = Y_i \Psi^{\dagger}_{i\text{noise}}$
 \STATE \, If $||P_i - P_{i+1}|| > \epsilon$
 \STATE \,\,\,\, then $i \longleftarrow i+1$ 
 \STATE \, Else
 \STATE \,\,\,\, $P^{*} = P_{i+1}$
 \STATE End
  \end{algorithmic}
  \caption{On-policy implementation}
  \label{alg:complete}
\end{algorithm}


\subsection{Off-policy implementation}
Consider the open-loop system (assumed to be stable) 
\begin{equation}
    \dot{\hat{\psi}}_{\mathbf{N}} = \mathcal{A}_{\mathbf{N}}\hat{\psi}_{\mathbf{N}} + \begin{bmatrix}
    B_1 \\ B_2 \hat{\psi}_{\mathbf{N-1}}
    \end{bmatrix} u,
    \label{eqn:offpolicy1}
\end{equation}
which can be written  as 
\begin{equation}
     \dot{\hat{\psi}}_{\mathbf{N}} = \mathcal{A}_{\mathbf{N}}\hat{\psi}_{\mathbf{N}} -B(K_{\mathbf{N}}) \hat{\psi}_{\mathbf{N}} + B(K_{\mathbf{N}}) \hat{\psi}_{\mathbf{N}} + \begin{bmatrix}
    B_1 \\ B_2 \hat{\psi}_{\mathbf{N}-1}
    \end{bmatrix} u,
\end{equation}
where $B(K_{\mathbf{N}})$ from \cite{motee2} (affine in $K_{\mathbf{N}}$, proof of Lemma 7) is given by
\begin{equation}
  B(K_{\mathbf{N}}) =     \begin{bmatrix}
 B_1 K_{1\mathbf{N}} & B_1 K_{2\mathbf{N}} & \hdots \\ 
      0 & B_2 \otimes K_{1\mathbf{N}} & \hdots \\
      \vdots & \vdots & \vdots \\
    \end{bmatrix}.
\end{equation}
As in Section VI-A, we input an exploratory noise $u(t) = u_{\text{noise}}(t)$. We write the derivative of the Lyapunov function as 
\begin{multline}
 \dfrac{d \big(\hat{\psi}_{\mathbf{N}}(t)^TP_{\mathbf{N}(i)}\hat{\psi}_{\mathbf{N}}(t) \big)}{dt} = \hat{\psi}_{\mathbf{N}}^T (\mathcal{A}_{cl,i}^TP_i + P_i \mathcal{A}_{cl,i} )  \hat{\psi}_{\mathbf{N}} \\ + 2 \hat{\psi}_{\mathbf{N}}^TP_i \Big( B(K_{\mathbf{N}}) \hat{\psi}_{\mathbf{N}} + \begin{bmatrix}
 B_1 \\ B_2 \hat{\psi}_{\mathbf{N}-1}
 \end{bmatrix} u_{\text{noise}} \Big),
\end{multline}
where $\hat{\psi}_{\mathbf{N}}$ is the trajectory for the open-loop system \eqref{eqn:offpolicy1} with $u=u_{\text{noise}}$. Therefore, $\Psi_i$ can be modified as \scriptsize
\begin{multline}
    \Psi_i^{off} = \Psi_i - \\ \begin{bmatrix}
\int_{t_0 + iT}^{t_0 + iT + \delta t} 2\beta(\psi_{\mathbf{N}})\big( B(\psi_{\mathbf{N}-1})u_{\text{noise}} + B(K_{\mathbf{N}})\psi_{\mathbf{N}}\big)\\
\vdots \\
\int_{t_0 + (i+1)T + \delta t}^{t_0 + (i+1)T + \delta t} 2\beta(\psi_{\mathbf{N}})\big( B(\psi_{\mathbf{N}-1})u_{\text{noise}} + B(K_{\mathbf{N}})\psi_{\mathbf{N}}\big) \nonumber
\end{bmatrix}.
\end{multline}\normalsize
Following is the list of changes that need to be made to Alg. \ref{alg:complete} to make it off-policy:
\begin{enumerate}
    \item In step 3, $u=u_{\text{noise}}$ in learning phase till convergence.
    \item In step 4, replace $\Psi_i$ by $\Psi^{off}_i$
    \item Once $P_i$ converges, or iterations reach end, actuate the learned control $u = - R^{-1}(B_1^T P^{*}_{11} x + B_1^T P^{*}_{12} \psi_{\mathbf{1-N}} + \psi_{\mathbf{N-1}}^T B_2^TP^{*}_{21} x )$.
\end{enumerate}

\noindent For both on-policy and off-policy methods, the truncation $\psi$ can be of any order. However, for our simulations in the next section, we will use $N=2,3$ to show its advantage over other optimal control methods for linear and nonlinear systems.

\section{Structured Carleman RL Control}
{\color{black}
Designing a fully dense Carleman controller may not always be feasible for implementation, due to bandwidth and communication topology limitations. For a multi-agent system, this means requiring an all-to-all communication topology which might be expensive to build and maintain. To tackle this practical challenge, we present a structured version of our proposed Carleman based RL controller, which can be designed in a data-driven manner according to a desired topology based on the actual communication topology, provided that a stabilizing controller for the topology exists.
}
\subsection{Recapitulation of model-based design}
Note that, according to \cite{ricattilike}, the Lyapunov equation for the Carleman linearized system (4) can be written in a Riccati-like equation as 
\begin{equation}
    \mathcal{A}^T P + P\mathcal{A} + Q - (P \begin{bmatrix}
    B_1 \\ 0
    \end{bmatrix} + W_P)R^{-1}(\begin{bmatrix}
    B_1 \\ 0
    \end{bmatrix}^TP + W_P^T) = 0.
    \label{eqn:carlricatti}
\end{equation}
    \begin{equation}
        u^{*} = -R^{-1} ( W_P + \begin{bmatrix}
        B_1 \\0
        \end{bmatrix}^T P) \psi. 
    \end{equation}

\begin{theorem}
For an arbitrary matrix $L$, we can find a positive definite $P$ for the equation
\begin{multline}
     \mathcal{A}^T P + P\mathcal{A} + Q - (P \begin{bmatrix}
    B_1 \\ 0
    \end{bmatrix} + W_P)R^{-1}(\begin{bmatrix}
    B_1 \\ 0
    \end{bmatrix}^TP + W_P^T) \\ + L^TRL = 0,
    \label{eqn:genricatti}
\end{multline}
and the control feedback gain $K = \Pi_P - L$, where 
 $\Pi_P$ is such that 
\begin{equation}
\Pi_P \psi = R^{-1} ( \psi^T W_P + \begin{bmatrix}
        B_1 \\0
        \end{bmatrix}^T P \psi).
\end{equation}

\end{theorem}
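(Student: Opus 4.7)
The plan is to close the loop in the infinite-dimensional Carleman Lyapunov equation~\eqref{eqn:carllyapunov} under the candidate feedback $u=-K\psi$ with $K=\Pi_P-L$, and show by direct algebraic manipulation that the resulting Lyapunov equation for $P$ coincides with the generalized Riccati-like equation~\eqref{eqn:genricatti}. Existence of a positive-definite solution will then follow from essentially the same existence argument that underlies the unstructured case in Theorem 2, since adding the extra term $L^TRL$ only inflates the effective state weighting from $Q$ to $\tilde Q:=Q+L^TRL\succeq 0$.

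Writing $\bar B_1:=\begin{bmatrix}B_1\\0\end{bmatrix}$ for brevity, the first step is to substitute $K=\Pi_P-L$ into $\mathcal{A}_{cl}^T P + P\mathcal{A}_{cl} = -(Q+K^TRK)$ with $\mathcal{A}_{cl}=\mathcal{A}-B_{\psi}K$. Using the quadrization identity established in Theorem 1, namely that $B_{\psi}^T P\psi=(\bar B_1^T P+W_P^T)\psi=R\Pi_P\psi$ after absorbing the higher-order monomials into $\psi$, the cross terms of the expanded Lyapunov equation satisfy $\psi^T(-K^TB_{\psi}^T P-PB_{\psi}K)\psi = -\psi^T\bigl(K^TR\Pi_P+\Pi_P^TRK\bigr)\psi$. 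Expanding this with $K=\Pi_P-L$ together with $K^TRK=(\Pi_P-L)^TR(\Pi_P-L)$ produces a large cancellation, leaving the net contribution $-\Pi_P^TR\Pi_P+L^TRL$. Substituting back, the closed-loop Lyapunov identity reads $\mathcal{A}^T P+P\mathcal{A}+Q-\Pi_P^TR\Pi_P+L^TRL=0$, and re-expanding $\Pi_P^TR\Pi_P=(P\bar B_1+W_P)R^{-1}(\bar B_1^T P+W_P^T)$ recovers exactly~\eqref{eqn:genricatti}.

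For existence of a positive-definite $P$, I would invoke the standard Riccati existence theory in the Carleman space already used in Theorem 2: the modified state weight $\tilde Q=Q+L^TRL$ is still positive semi-definite, and provided $(\mathcal{A},\tilde Q^{1/2})$ is detectable and $(\mathcal{A},\bar B_1)$ stabilizable---both inherited from the assumptions of Theorem 2, since controllability of $(\mathcal{A},Q^{1/2})$ certainly implies detectability of $(\mathcal{A},\tilde Q^{1/2})$---the Riccati-like equation~\eqref{eqn:genricatti} admits a unique positive-definite solution $P$. The associated $K=\Pi_P-L$ is then the corresponding structured stabilizing feedback by construction.

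The main obstacle will be making the quadrization identification rigorous in the structured setting: the earlier derivations in Sections III--IV implicitly handle only the unstructured choice $K=\Pi_P$, whereas here an arbitrary $L$ enters and one has to verify that $B_{\psi}^T P\psi=R\Pi_P\psi$ continues to act as a genuine operator identity after absorption, so that the cross-term cancellation above is not merely a pointwise statement along a particular trajectory. A secondary concern is that for a general $L$ the closed-loop matrix $\mathcal{A}-B_{\psi}(\Pi_P-L)$ need not be stabilizable a priori, so the theorem should really be read under the standing hypothesis stated at the start of this section---that a stabilizing controller consistent with the prescribed communication topology exists---and the Riccati existence argument then delivers the optimal such $P$ within that structured class.
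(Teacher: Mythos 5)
Your core algebra is correct and is essentially the same cancellation the paper performs: expand the closed-loop quadratic form, use the quadrization identity $B_{\psi}^T P\psi = R\Pi_P\psi$ to turn the cross terms into $-K^TR\Pi_P-\Pi_P^TRK$, and watch everything collapse to $-\Pi_P^TR\Pi_P+L^TRL$. The difference is one of packaging. The paper proves the theorem by assuming a solution $P$ of \eqref{eqn:genricatti} exists and computing $\dot V_L$ for $V_L=\psi^TP\psi$ along the closed loop with $u=-(\Pi_P-L)\psi$, arriving at $\dot V_L=-x^TQ_1x-\psi^T(\Pi_P-L)^TR(\Pi_P-L)\psi<0$, i.e.\ a direct stability verification. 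You instead read the same manipulation in the other direction, establishing that \eqref{eqn:genricatti} is algebraically equivalent to the closed-loop Lyapunov equation $\mathcal{A}_{cl}^TP+P\mathcal{A}_{cl}=-(Q+K^TRK)$ with $K=\Pi_P-L$ --- which is precisely the content of the paper's own Lemma~5, stated separately in the model-free subsection --- and then appeal to Riccati existence theory with the inflated weight $\tilde Q=Q+L^TRL$. Your route has the merit of at least addressing the ``we can find a positive definite $P$'' clause, which the paper's proof silently skips (it only verifies negativity of $\dot V_L$ given such a $P$); the cost is that your existence step remains an appeal to standard finite-dimensional Riccati theory in an infinite-dimensional, state-dependent-$B_\psi$ setting, which is no more rigorous than the paper's omission. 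Your two flagged concerns (whether the quadrization identity is a genuine operator identity rather than a pointwise statement, and whether stabilizability survives an arbitrary $L$) are well placed and are not resolved by the paper either.
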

\begin{proof}
Consider the Lyapunov function $V_L = x^T P x$. Then we have
\begin{multline}
    \dot{V}_L = \psi^T P \Big( \mathcal{A}\psi + B_{\psi} \big(-R^{-1}(\psi^TW_P + \begin{bmatrix}
    B_1 \\ 0
    \end{bmatrix}^T P \psi) + L \psi \big) \Big) \\
    + \Big( \mathcal{A}\psi + B_{\psi} \big(-R^{-1}(\psi^TW_P + \begin{bmatrix}
    B_1 \\ 0
    \end{bmatrix}^T P \psi) + L \psi \big) \Big)^T P \psi \\
    = \psi^T \Big( \mathcal{A}^TP + P\mathcal{A} - B_{\psi} \Pi^T_P + PB(\psi)L - \Pi_P B_{\psi}^T \\ + L^TB_{\psi}^TP    \Big) \psi \\
    = \psi^T \Big( -Q + \Pi_P^TR\Pi_P - L^TRL - PB_{\psi}\Pi_P^T - \Pi_PB_{\psi}^TP  \Big) \psi  \\
\end{multline}
Recall that $\psi^TPB(\psi)\psi = \psi^T \Pi_P \psi $. Therefore, we have
\begin{multline}
    \dot{V}_L = \psi^T \Big( -Q - (\Pi_P^T - L^T)R(\Pi_P - L)  \Big) \psi 
    \\=  -x^T Q_1 x - \psi^T (\Pi_P^T - L^T)R(\Pi_P - L) \psi.
\end{multline}
\end{proof}
Note that since $Q_1>0$, and $R>0$, the RHS is negative definite. Therefore $V_L < 0$.

Next, based on \cite{geromel} and \cite{sparsesayak}, the following iterative steps are presented to compute $K$ when the model is known. Let $\Omega$ be the binary matrix denoting the desired structure of $K$. Therefore, if $K_{i,j}$ is desired to be 0, $\Omega_{i,j} = 0$, else $\Omega_{i,j}=1$. Denote $F(\Pi_P) = \Pi_P \odot \Omega^c$, where $\Omega^c = 1 - \Omega$ denotes the complement of $\Omega$.
\subsubsection*{Steps}
\begin{enumerate}
    \item Set the iteration index $i=0$ and $L=0$.
    \item Solve the Riccati equation \eqref{eqn:genricatti}
    \item Compute $L_{i+1} = F(\Pi_P)$. If $||L_{i+1} - L_i|| < \epsilon$, go to step 4 else set $i=i+1$ and return to step 2.
    \item Determine feedback gain $K$ as 
    \begin{equation}
        K = \Pi_{P_{i+1}} - L_{i+1}.
    \end{equation}
\end{enumerate}

\subsection{Model-free design}

Next, we present a lemma which would be important to present the model-free policy iteration based method.
\begin{lemma}
Solving the Riccati like equation \eqref{eqn:genricatti} is equivalent to solving the Lyapunov equation \eqref{eqn:carllyapunov} with $K = \Pi_P - L$.
\end{lemma}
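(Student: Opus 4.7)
The plan is to prove the equivalence in both directions by completion of squares applied to the Riccati-like term, with the quadrization identity used to absorb the $\psi$-dependence of $B_\psi$. Throughout the argument every identity is understood as holding after pre- and post-multiplication by $\psi^T$ and $\psi$, since $B_\psi$ depends on $\psi$ and $\Pi_P$ is defined only through its action on $\psi$. This is the same convention adopted in the proofs of Theorem 1 and Theorem 8.

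First I would substitute $K = \Pi_P - L$ into the closed-loop Lyapunov equation \eqref{eqn:carllyapunov} and compute, inside $\psi^T(\cdot)\psi$,
\[
-\psi^T\bigl(PB_\psi K + K^TB_\psi^T P\bigr)\psi = -2\,\psi^T PB_\psi(\Pi_P - L)\psi .
\]
By the quadrization identity that defines $W_P$, one has $\psi^T PB_\psi\,\psi = \psi^T\bigl(W_P^T + \begin{bmatrix}B_1\\0\end{bmatrix}^T P\bigr)\psi$, which together with the defining relation $R\,\Pi_P\psi = \bigl(W_P^T + \begin{bmatrix}B_1\\0\end{bmatrix}^T P\bigr)\psi$ yields $\psi^T PB_\psi \Pi_P\psi = \psi^T \Pi_P^T R\,\Pi_P\psi$ and $\psi^T PB_\psi L\,\psi = \psi^T \Pi_P^T RL\,\psi$. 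Expanding $K^T R K = (\Pi_P - L)^T R(\Pi_P - L)$ produces the four terms $\Pi_P^T R\Pi_P - \Pi_P^T RL - L^T R\Pi_P + L^T RL$.

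Substituting these identities into the Lyapunov equation, the cross terms $\pm 2\,\psi^T \Pi_P^T RL\,\psi$ cancel, and combining $-2\,\psi^T \Pi_P^T R\Pi_P\psi + \psi^T \Pi_P^T R\Pi_P\psi = -\psi^T \Pi_P^T R\Pi_P\psi$ reproduces precisely the quadratic term $-\psi^T\bigl(P\begin{bmatrix}B_1\\0\end{bmatrix} + W_P\bigr)R^{-1}\bigl(\begin{bmatrix}B_1\\0\end{bmatrix}^T P + W_P^T\bigr)\psi$ appearing in \eqref{eqn:genricatti}. The residual $L^T RL$ term matches the corresponding term in \eqref{eqn:genricatti}, completing one direction. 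The converse follows by reading the same chain of equalities backwards: starting from \eqref{eqn:genricatti} sandwiched by $\psi^T,\psi$, the same completion of squares reassembles $(\Pi_P - L)^T R(\Pi_P - L)$ and the bilinear pair $PB_\psi K + K^T B_\psi^T P$ with $K = \Pi_P - L$.

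The main obstacle will be the careful bookkeeping of the $\psi$-dependence: the operators $B_\psi$ and $\Pi_P$ are not constants, so the Lyapunov and Riccati-like equations are \emph{not} equivalent as matrix identities and become equivalent only after sandwiching by $\psi^T$ and $\psi$ and invoking the quadrization absorption that defines $W_P$ and $\Pi_P$. I would therefore justify the final step from ``equality of $\psi^T(\cdot)\psi$ for every $\psi$'' back to ``equality of the underlying symmetric coefficient operators'' using the same reasoning employed in the proofs of Theorem 1 and Theorem 3, where this convention is used without further comment. Once that bookkeeping is in place, the argument is essentially the standard completion-of-squares identity linking a Riccati equation to its closed-loop Lyapunov form, lifted into the Carleman space.
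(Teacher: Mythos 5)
Your proposal is correct and follows essentially the same route as the paper: pre- and post-multiplication by $\psi^T$ and $\psi$, the quadrization identity $R\,\Pi_P\psi = \bigl(\begin{smallmatrix}B_1\\0\end{smallmatrix}\bigr)^T P\psi + W_P^T\psi$, and expansion of the square $(\Pi_P-L)^TR(\Pi_P-L)$ so that the cross terms cancel. The only cosmetic difference is that you start from the Lyapunov equation and arrive at the Riccati-like equation while the paper goes the other way, which is immaterial since every step is an equality.
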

\begin{proof}
Consider pre and post multiplying \eqref{eqn:genricatti} by $\psi^T$ and $\psi$. We have
\begin{multline}
    \psi^T \Big( 
     \mathcal{A}^T P + P\mathcal{A} + Q - (P \begin{bmatrix}
    B_1 \\ 0
    \end{bmatrix} + W_P)R^{-1}(\begin{bmatrix}
    B_1 \\ 0
    \end{bmatrix}^TP\\  + W_P^T) + L^TRL  \Big) \psi = 0.
\end{multline}

Note, that $L = \Pi_P - K$. Therefore, we have
\begin{multline}
    \psi^T \Big( 
     \mathcal{A}^T P + P\mathcal{A} + Q - (P \begin{bmatrix}
    B_1 \\ 0
    \end{bmatrix} + W_P)R^{-1}(\begin{bmatrix}
    B_1 \\ 0
    \end{bmatrix}^TP + W_P^T) \\ + (\Pi_P - K)^TR(\Pi_P-K)  \Big) \psi = 0.  
\end{multline}

Recalling $\Pi_P$ from (38), we can write
    \begin{equation}
      \psi^T \Big(   \mathcal{A}^TP + P\mathcal{A} + Q - \Pi_P^TRK - K^TR\Pi_P + K^TRK\Big) \psi = 0.
    \end{equation}
Recall that $\psi^T P\begin{bmatrix}
    0 \\ B_2
    \end{bmatrix} \psi = W_P \psi$, and $\Pi_P \psi$ from (38), we get the Lyapunov equation

\begin{equation}
    \mathcal{A}_{cl}^TP + P\mathcal{A}_{cl} = -(Q + K^TRK), \,\, \forall \psi
\end{equation}
\end{proof}

Now, we are ready to present the model-free version of the structured iterative algorithm.
\subsubsection*{Steps}
\begin{enumerate}
    \item Set iteration index $i=0$, $L=0$ and a stabilizing $K_0$.
    \item For index $i$ and $K_i$, solve for $P_i$ in (13).
    \item Update the controller gain $K_{i+1} = \Pi_{P_i} - L_i$.
    \item Update $L_{i+1} = F(\Pi_{P_i})$. If $||L_{i+1} - L_i|| < \epsilon$, stop iterations else return to step 2.
\end{enumerate}

Note that since we have already established that solving the Riccati like equation \eqref{eqn:genricatti} is equivalent to solving \eqref{eqn:carllyapunov} using Lemma 5, step 2 of the model based method is equivalent to step 2 of the above method. Since all functions are smooth, the convergence follows from \cite{ricattilike} and \cite{sparsesayak}. Results are presented in Section VIII(B).

\subsection{$N^{th}$ order truncation}
Note that, the discussion so far has been for the design of structured controllers for the infinite dimensional Carleman system. However, for practical implementation, we need to have a finite truncated version of the above steps. 

In Section 5, we already showed that if the condition \eqref{eqn:truncerror} is satisfied, then even for the truncated system, if the updates are made according to Section IV(B), then the controllers at all iterations are stabilizing and solves the truncated Lyapunov equation. Therefore, step 2 always yields a stabilizing controller given than \eqref{eqn:truncerror} is met. 

Next, we show that a truncated version of the Riccati like equation \eqref{eqn:genricatti} holds for certain condition, which would imply that step 3 also holds and leads to convergence of the controller to the desired structure. Consider the truncated Riccati like equation
\begin{multline}
       \mathcal{A}_N^T P_N + P_N\mathcal{A}_N + Q_N - (P_N \begin{bmatrix}
    B_1 \\ 0
    \end{bmatrix} + W_{P_N})R^{-1}(\begin{bmatrix}
    B_1 \\ 0
    \end{bmatrix}^TP_N \\ + W_{P_N}^T) + L_N^TRL_N  = 0
    \label{eqn:trunccarlricatti}
\end{multline}

Similar to theorem 4, consider a Lyapunov function $V_{L_N} = \psi_N^T P_N \psi_N$, then the Lyapunov derivative is
\begin{multline}
    \dot{V}_{L_N} =  \psi_N^T P_N \Big( \mathcal{A}_N\psi_N + B_{\psi_{\mathbf{N}}} \big(-R^{-1}(\psi_N^TW_{P_N} \\  + \begin{bmatrix}
    B_1 \\ 0
    \end{bmatrix}^T P_N \psi_N) + L_N \psi_N \big) \Big) 
    + \Big( \mathcal{A}_N\psi_N +   \\ \big(-R^{-1}B_{\psi_{\mathbf{N}}}(\psi_N^TW_{P_N} + \begin{bmatrix}
    B_1 \\ 0
    \end{bmatrix}^T P_N \psi_N) + L_N \psi_N \big) \Big)^T P_N \psi_N \\
\end{multline}
On doing some algebraic manipulations, we can write the derivative as 
\begin{multline}
    \dot{V}_{L_N} = \psi_N^T \Big( -Q_N - (\Pi_{P_N}^T - L_N^T)R(\Pi_{P_N} - L_N)  \Big) \psi_N \\
    + 2 \underbrace{\psi_N^T \big( \Pi_{P_N}^TR - P_{\mathbf{N}}B_{\psi_{\mathbf{N}}} \big)}(L_N - \Pi_{P_N}) \psi_N.
\end{multline}
In the infinite dimensional case, $R^{-1}B_{\psi}P\psi = \Pi_{P}\psi$, therefore the underbraced term becomes 0. Moreover, it is easy to see that the underbraced term for the truncated case is $\psi_N^T \big( \Pi_{P_N}^TR - P_{\mathbf{N}}B_{\psi_{\mathbf{N}}} \big) = -\epsilon_{trunc}^T(P_{\mathbf{N}})R$, where $\epsilon_{trunc}$ is from \eqref{eqn:truncexp}. Note that similar to the proof of Lemma \ref{lem:lemma8}, since the sign indefinite term here is $\Delta^s = 2 \psi_N^T \big(\epsilon_{trunc}^T(P_{\mathbf{N}})R \big)(L_N - \Pi_{P_N}) \psi_N$, $\mathcal{O}(\Delta^s) > \mathcal{O}(\psi_{\mathbf{N}}^2)$. Hence, following similar proof from Lemma \ref{lem:lemma8}, we can show that there exists $d_2>0$ such that $\forall ||\psi_{\mathbf{N}}|| < d_2$, the truncated Riccati-like equation \eqref{eqn:trunccarlricatti}, gives a stabilizing controller $K_{\mathbf{N}} = \Pi_{P_{\mathbf{N}}} - L$.

\section{Sparse Carleman RL Control}
As shown in Section VII, the method of designing a structured control for a Carleman system designs a controller $K$ in the Carleman space which is consistent with a desired topology $\Omega$. However, the method cannot be used to design a sparse controller with a cardinality constraint, where the goal is to minimize the number of non-zero entries in the control gain $K$ for the Carleman system to derive a simple stabilizing control. {\color{black} Such objectives are important in applications having bandwidth constraints, where even though communication links might be present, the bandwidth of several links might be low, leading to delays in data transmission. The proposed sparse algorithm tries to design a simpler Carleman controller (as compared to the fully dense controller) without significantly compromising on the closed-loop performance. As we will see in this section, and validate later in the results, that the proposed method can save bandwidth with a minor drop in closed-loop performance.} In this section, we present a ADMM based method, which works in tandem with our proposed Carleman RL (section VI) to iteratively design a sparse control.

\subsection{Objective}
Let us consider the $i^{th}$ iteration and the corresponding learnt cost matrix as $P_i$. We want to learn a controller $K_{i+1}$ which minimizes the cardinality of $K_{i+1}$. We recall the generalized Riccati like equation \eqref{eqn:carlricatti} in the Carleman space. We proved that iteratively setting $K= \Pi(P) - L$ to solve \eqref{eqn:carlricatti} either in a model-based or data-driven way yields the desired control. As shown in \cite{abhishek}, the loss in optimality can be given by
\begin{equation}
    || J - J_L|| < \beta ||L||_F^2,
\end{equation}
where $||.||_F$ denotes the Frobenius norm and $\beta$ is a constant scalar. Therefore, we define our objective function at the $i^{th}$ iteration as 
\begin{equation}
    \min : \dfrac{1}{2} ||L_{i+1}||_F^2 + \text{card}(K_{i+1}),
\end{equation}
where $\text{card}(.)$ denotes cardinality of $(.)$. However, as shown in \cite{l1norm}, such $\ell_0$ problems are difficult to solve. Therefore, we consider an equivalent $\ell_1$ formulation of the above problem as
\begin{multline}
    \min : \dfrac{1}{2}||L_{i+1}||_F^2  + \gamma \sum\limits_{m,n} W_{mn} ||K_{i+1(mn)}|| 
    \,\,,\,\,\\\text{s.t.}\,:\, K_{i+1} = R^{-1} \Bigg( \begin{bmatrix}
    B_1 \\0
    \end{bmatrix}^TP_i + W^T_{P_i} \Bigg) - L_{i+1},
    \label{eqn:l1min}
\end{multline}
where $\gamma$ is a scalar weight, and $W_{mn}$ are weights for the weighted $\ell_1$-norm. 

\subsection{ADMM-based policy iteration}
To numerically solve \eqref{eqn:l1min}, we formulate the problem as an alternating direction method of multipliers (ADMM) problem. The augmented Lagrangian for the above lasso \cite{l1norm} like problem is given by
\begin{multline}
   \textbf{minimize : } \\ h(K_{i+1},L_{i+1},\Lambda,\rho) =  \dfrac{1}{2}||L_{i+1}||_F^2  + \gamma \sum\limits_{m,n} W_{mn} ||K_{i+1(mn)}|| \\ +  \text{Tr}\big(\Lambda^T (K_{i+1} + L_{i+1} - \Pi(P_i))\big)  + \dfrac{\rho}{2} || K_{i+1} + L_{i+1} - \Pi(P_i) ||^2,
    \label{eqn:auglagrangian}
\end{multline}
where $\Lambda$ is the Lagrange multiplier matrix, and $\rho$ is the dual update step length. To solve \eqref{eqn:auglagrangian}, we compute the following partial derivatives 
\begin{multline}
    \dfrac{\partial h}{\partial L_{i+1}} = L_{i+1} + \Lambda + \rho\big(K_{i+1} + L_{i+1} - \Pi(P_i)\big),\\
    \dfrac{\partial h}{\partial K_{i+1(mn)}} = \gamma \dfrac{\partial W_{mn} ||K_{i+1(mn)}||}{\partial K_{i+1(mn)}} + \Lambda  + \rho\big(K_{i+1} + L_{i+1} \\ - \Pi(P_i)\big).
    \label{eqn:partials}
\end{multline}
We denote $z$ as the internal iteration number of the ADMM at each $i^{th}$ policy iteration. \eqref{eqn:partials} can be solved in a decoupled manner as 
\begin{multline}
    L \min \text{step} : L_{i+1}^z = \dfrac{- \big( \Lambda^{z-1} + \rho (K^{z-1} - \Pi(P_i)) \big)}{1+\rho}\\
    K \min \text{step} : K_{i+1(mn)}^z = \mathcal{S}_{\gamma W_{mn}/\rho} \big(  \Pi(P_i) - L_{i+1}^z + \dfrac{\Lambda^{z-1}}{\rho}  \big)\\
    \Lambda \text{ update step} : \Lambda^{z} = \Lambda^{z-1} + \rho \big( K_{i+1}^z + L_{i+1}^z - \Pi(P_i) \big),
    \label{eqn:updatesparseeqn}
\end{multline}
where $\mathcal{S}_{(.)}$ is the soft thresholding function about the threshold point $(.)$. The soft thresholding function is given by
\begin{equation}
    \mathcal{S}_{\mu}(\omega) = 
\begin{cases}
\omega + \mu, \text{ for } \omega < \mu\\
0, \text{ for } -\mu\leq\omega\leq\mu\\
\omega - \mu , \text{ for } \omega > \mu
\end{cases} 
\end{equation}

The steps of policy iteration in tandem with the proposed ADMM based sparsity in summarized in Algorithm \ref{alg:sparse}.
\begin{algorithm}[h]
  \begin{algorithmic}[1]
 \STATE Start with an initializing $P_0$ and corresponding $K_1,L_1$ such that (1) is stable.
 \STATE \textbf{While} $||P_i - P_{i-1}|| > \epsilon$
 \STATE \,\,Learn $P_i$ using on or off-policy methods in Sec VI.
 \STATE \,\,Compute $\Pi(P_i) = R^{-1}(\begin{bmatrix}
 B_1 \\ 0
 \end{bmatrix}^TP_i + W_{P_i}^T) $
 \STATE \,\,Initialize $\Lambda^0, \rho$ and weights $W_{mn}$
 \STATE \,\,\textbf{While} 
\STATE \,\,\,\, Compute $K_{i+1}^{z+1}$, $L_{i+1}^{z+1}$ and $\Lambda^{z+1}$ from \scriptsize
\begin{multline*}
    L-\min \text{step} : L_{i+1}^z = \dfrac{- \big( \Lambda^{z-1} + \rho (K^{z-1} - \Pi(P_i)) \big)}{1+\rho}\\
    K-\min \text{step} : K_{i+1(mn)}^z = \mathcal{S}_{\gamma W_{mn}/\rho} \big(  \Pi(P_i) - L_{i+1}^z + \dfrac{\Lambda^{z-1}}{\rho}  \big)\\
    \Lambda-\text{ update step} : \Lambda^{z} = \Lambda^{z-1} + \rho \big( K_{i+1}^z + L_{i+1}^z - \Pi(P_i) \big),
    \label{eqn:updatesparseeqn}
\end{multline*}\normalsize
\STATE \,\,\,\, Increase step size $\rho \leftarrow \alpha \rho $, where $\alpha$ is step-size increment factor
\STATE \,\,\,\, \textbf{If} $||K_{i+1}^z - K_{i+1}^{z-1}|| > \epsilon_1$ \footnotesize \& \normalsize $||L_{i+1}^z - L_{i+1}^{z-1}|| > \epsilon_2$ 
\STATE \,\,\,\,\,\,\, Update internal iteration $z = z+1$
\STATE \,\,\,\, \textbf{Else}
\STATE \,\,\,\,\,\,\, \textit{Break while}
\STATE \,\, \textbf{End while}
\STATE \,\, Update $i = i+1$, $K_{i+1} = K_{i+1}^z , L_{i+1} = L_{i+1}^z$ and go to step 3.
\STATE \textbf{End while}
  \end{algorithmic}
  \caption{Sparse controller learning}
  \label{alg:sparse}
\end{algorithm}
Note that in Algorithm \ref{alg:sparse}, $\alpha > 1$ is the step size increment factor in step 8 which ensures that the ADMM algorithm is not trapped at a local minima by increasing the learning rate factor.
Next, we state the following theorem for convergence of $P_i$.
\begin{theorem}
Given that at each $i^{th}$ policy iteration, the ADMM steps converge for a large $z$ minimizing \eqref{eqn:auglagrangian}, then the learnt $P_i$ will converge for $i \rightarrow \infty$.
\end{theorem}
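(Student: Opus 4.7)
The plan is to leverage the equivalence between the ADMM-driven sparse policy iteration and the generalized Riccati-like iteration already developed in the structured control design of Section VII, and then to invoke the monotone convergence argument inherited from \cite{ricattilike} and \cite{sparsesayak}. First, I would note that by the assumed convergence of the ADMM inner loop, at each outer iteration $i$ the updates \eqref{eqn:updatesparseeqn} produce a fixed point $(K_{i+1}^{\star}, L_{i+1}^{\star}, \Lambda^{\star})$ satisfying the KKT conditions of \eqref{eqn:auglagrangian}. In particular the primal feasibility constraint gives
\begin{equation}
K_{i+1}^{\star} + L_{i+1}^{\star} = \Pi(P_i),
\end{equation}
so the updated controller admits exactly the structured form $K_{i+1} = \Pi(P_i) - L_{i+1}^{\star}$ that appears in the model-based design of Section VII.

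Next, I would invoke Lemma 5: since the data-driven policy evaluation step in line~3 of Algorithm~\ref{alg:sparse} recovers $P_i$ as the solution of the Lyapunov equation \eqref{eqn:carllyapunov} associated with the feedback $K_{i+1}=\Pi(P_i)-L_{i+1}^{\star}$, this is equivalent to $P_i$ being the solution of the generalized Riccati-like equation \eqref{eqn:genricatti} with $L$ replaced by $L_{i+1}^{\star}$. Thus the coupled outer-inner iteration is precisely a sparsity-perturbed version of the structured Riccati iteration discussed in Section VII, with an additional term $L_{i+1}^{\star T} R L_{i+1}^{\star}$ entering at each step.

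To establish convergence of $\{P_i\}$, I would then combine three ingredients. First, by Theorem~4 (applied iteratively) together with Lemma~\ref{lem:lemma8}, every $P_i$ generated along the trajectory is positive definite and yields a Hurwitz closed-loop map $\mathcal{A}_{cl,i}$, provided $\lambda_{\min}(Q_1)$ is sufficiently large and $\|x_0\|$ is within the basin guaranteed by Section V.C. This bounds the sequence $J_i = \psi_0^{T} P_i \psi_0$ from above and below by $0$. Second, the soft-thresholding map $\mathcal{S}_{\gamma W_{mn}/\rho}$ and the linear $L$-minimization step are both continuous in their inputs, so the outer-iteration map $P_i \mapsto P_{i+1}$ is continuous. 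Third, by the same monotonicity argument used in \cite{ricattilike,sparsesayak} for the deterministic structured iteration, the sequence $\{P_i\}$ is monotone non-increasing (modulo the perturbation $L_{i+1}^{\star T}RL_{i+1}^{\star}$); monotone-bounded sequences of symmetric matrices in the positive-definite cone converge in the Loewner order, and continuity of the update map upgrades this to convergence of the full sequence.

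The main obstacle is handling the fact that, unlike in the pure structured case where $L$ comes from a projection onto a fixed sparsity pattern, here $L_{i+1}^{\star}$ is the fixed point of a weighted $\ell_1$ problem whose coefficients depend on $P_i$ through $\Pi(P_i)$. Consequently, the perturbation $L_{i+1}^{\star T}RL_{i+1}^{\star}$ is itself a function of $P_i$, and one must show that $\|L_{i+1}^{\star} - L_i^{\star}\|\to 0$ so that the perturbation stabilizes. I would handle this by exploiting nonexpansiveness of the soft-thresholding operator, the contractive behavior of ADMM under the increasing step-size factor $\alpha$ in line~8 of Algorithm~\ref{alg:sparse}, and the eventual stabilization of the sign pattern of $K_{i+1}$ (which freezes the weights $W_{mn}$). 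Once $\|L_{i+1}^{\star}-L_i^{\star}\|\to 0$, the generalized Riccati iteration reduces to a standard contractive Riccati recursion in the limit, and convergence of $P_i$ follows by continuity.
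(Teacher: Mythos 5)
Your proposal follows essentially the same route as the paper's proof, which is itself quite terse: the paper simply observes that the augmented Lagrangian \eqref{eqn:auglagrangian} is strongly convex (so the inner ADMM loop converges to a unique, well-behaved $L_{i+1}$ as a function of $\Pi(P_i)$, citing the LASSO convergence results), and then asserts that the outer iteration converges ``similar to Section VII'' by the structured-control argument of \cite{geromel}. Your reduction via the ADMM fixed point $K_{i+1}^{\star}+L_{i+1}^{\star}=\Pi(P_i)$ and Lemma 5 to a perturbed generalized Riccati iteration is exactly the intended bridge, and your use of Theorem 4 and Lemma \ref{lem:lemma8} to guarantee that each $P_i$ is stabilizing and that $J_i=\psi_0^TP_i\psi_0$ is bounded is a welcome elaboration that the paper leaves implicit.

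That said, be aware that the two load-bearing claims you add beyond the paper's sketch are asserted rather than proven, and they are precisely where the argument is fragile. First, the statement that $\{P_i\}$ is ``monotone non-increasing modulo the perturbation $L_{i+1}^{\star T}RL_{i+1}^{\star}$'' is not a monotonicity result: because $L_{i+1}^{\star}$ changes with $i$ (it is the minimizer of a weighted $\ell_1$ problem whose data $\Pi(P_i)$ and weights $W_{mn}$ both move), the comparison between successive Lyapunov equations does not yield a definite sign, and a bounded sequence that is only ``approximately monotone'' need not converge. Second, the claim that the sign pattern of $K_{i+1}$ eventually freezes, so that $\|L_{i+1}^{\star}-L_i^{\star}\|\to 0$, is a nontrivial property of reweighted soft-thresholding that would itself require proof; nonexpansiveness of the prox operator gives you continuity of $L_{i+1}^{\star}$ in $\Pi(P_i)$ (which is what the paper means by ``well behaved''), but not a vanishing increment unless you have already shown $P_i$ is Cauchy --- which is circular. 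To close the argument rigorously you would need either a contraction estimate on the composite map $P_i\mapsto P_{i+1}$ (combining the Lipschitz constant of the prox with the sensitivity of the Lyapunov solution to $K$), or a genuine joint descent/monotonicity argument on a merit function that includes the $\ell_1$ penalty, neither of which is supplied here or, to be fair, in the paper.
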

\begin{proof}
Since the above ADMM problem \eqref{eqn:auglagrangian} is strongly convex, this implies that $L_{i+1}$ is a convex function of $\Pi(P_i)$ and is well behaved. Moreover, the convergence of ADMM follows from the convergence of the classic LASSO \cite{lasso1,lasso2} with linear constraints, where the step-size is chosen accurately for faster linear convergence. Therefore, similar to Section VII, the proof follows from \cite{geromel}.
\end{proof}

For the truncated version of the sparse RL controller, since we use the same Ricatti like equation as in \eqref{eqn:trunccarlricatti}, the condition for stability will remain same as discussed in the structured case. As we will shortly see in the simulation section, demanding more sparsity with a larger $\gamma$ can lead to convergence problem due to the condition not being satisfied at each iteration.

\section{Results}\label{sec:results}

\subsection{Case 1 : Second-order Oscillator}
{\color{black}First, we consider the Van der Pol oscillator system which is a popular oscillator example with nonlinear damping, and has a history of use in noth physical and biological sciences. Here, we consider a specific case of the oscillator inspired from  \cite{motee} }
\begin{align}
    \dot{x}_1 &= x_2\\
    \dot{x}_2 &= -x_1 + \dfrac{1}{2}x_2 x_1^2 + x_1 x_2 + g(x_1, x_2) u,
    \label{eqn:case1}
\end{align}
where $g(x_1, x_2) \in \mathbb{R}$. The objective function parameters are $Q_1 = [0 \; 0 ; 0 \; 1]$  and $R=1$. For this system, the control input by solving the HJB solution is given by
\begin{equation}
    u^{*} = -g(x_1,x_2) x_2.
\end{equation}

We choose $g(x_1, x_2) = (1+x_1)$, which yields the optimal linear control input as $u^{*} = -(1+x_1)x_2$. 

We simulate the system with an exploration noise for $0\leq t \leq 10$ seconds. The sampling time is chosen as $\delta t = 0.1$ seconds, and the controller is updated every 2 seconds till $t=8.5$ seconds.
\begin{figure}[h]
    \centering
    \includegraphics[width=\linewidth]{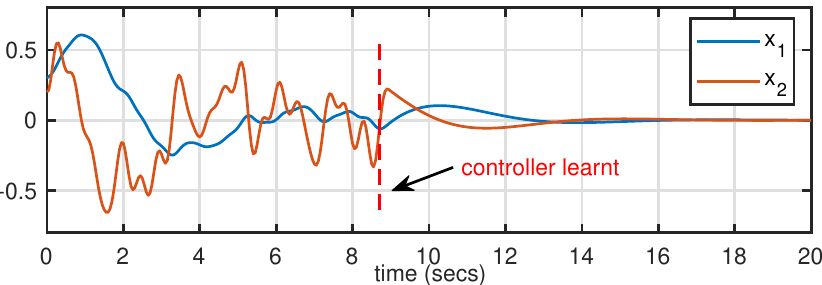}
    \caption{Learning of the $2^{nd}$ order RL controller}
    \label{fig:case1learning}
\end{figure}
\noindent Fig. \ref{fig:case1learning} shows the closed-loop trajectory for both the states $x_1$ and $x_2$ for the second-order Carleman RL control. We see that the learnt controller successfully stabilizes the system after $t=8.5$ seconds. 

Fig. \ref{fig:three graphs} shows the comparison of the learnt controller for the linear $\mathbf{N}=1$ and second order $\mathbf{N}=2$ truncation with the HJB solution. Fig. \ref{fig:oscsmalldev} shows the trajectories $x_1,x_2$ for a small initial deviation of $(0.08,0.06)$. In this case, we see that both the linear and $\mathbf{N}=2$ cases closely match the HJB solution. Fig. \ref{fig:oscbigdev} shows the results for a large initial deviation of $(0.8,0.7)$. In this case, the $\mathbf{N}=2$ controller performs significantly better than the linear control, and closely follows the HJB trajectory.
 \begin{figure}[h]
     \centering
     \begin{subfigure}[b]{0.49\linewidth}
         \centering
    \includegraphics[width=\linewidth]{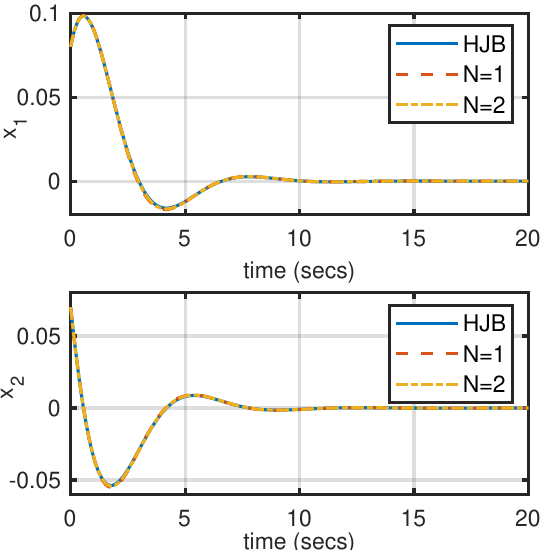}\vspace{-0.12cm}
    \caption{$x_1(0),x_2(0) = 0.08, 0.06$}
    \label{fig:oscsmalldev}
     \end{subfigure}
    \hfill
     \begin{subfigure}[b]{0.49\linewidth}
         \centering
         \includegraphics[width=\linewidth]{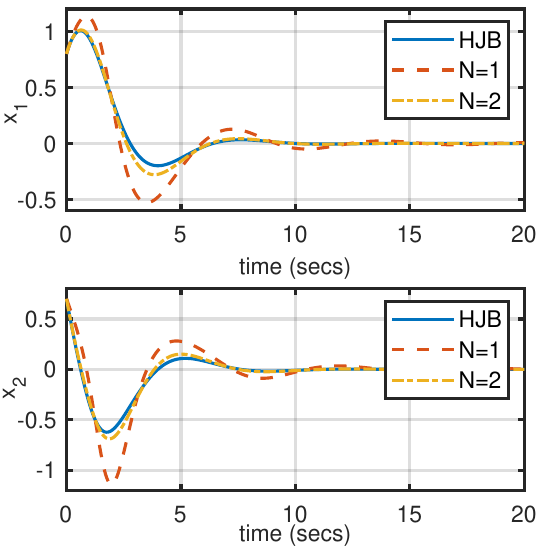}\vspace{-0.12cm}
    \caption{$x_1(0),x_2(0) = 0.8,0.7$}
    \label{fig:oscbigdev}
     \end{subfigure}
        \caption{Comparison of learnt controller with HJB solution}
        \label{fig:three graphs}
\end{figure}
Table \ref{tab:objvalue} shows the objective value for the different cases described in Fig. \ref{fig:three graphs}. We can see that for the smaller deviation case, the objective values are similar for all the three controllers, whereas for large deviations, the value of the $\mathbf{N}=2$ controller is much closer to the HJB value.
\begin{table}[h]
    \centering
    \begin{tabular}{|c|c||c|}
    \hline 
   & \multicolumn{1}{|c||}{Small deviation} & \multicolumn{1}{|c|}{Large deviation}\\
   \hline
        $\mathbf{N} = $ &  $J$ value &  $J$ value \\
        \hline 
       HJB  & 0.0119 & 2.2454 \\ \hline
       1  & 0.0119 & 2.9196  \\ \hline 
       2  & 0.0119 & 2.2720\\ \hline 
    \end{tabular}
    \caption{Objective value comparison}\vspace{-0.2cm}
    \label{tab:objvalue}
\end{table}

\subsection{Case 2 : Coordinated control of tugboat dynamics}
Next, consider an example of coordinated control for tugboats with nonlinear dynamics. Consider the dynamics of tugboat $j$ given by
\begin{equation}
    \dot{\eta}_j = R_j (\theta_j) v_j \,\, , \,\,    M_j \dot{v}_j + D_j v_j = \tau_j
\end{equation}
where $\eta_j = [x_j , y_j , \theta_j]^T$, $(x_j,y_j)$ is the position vector in an appropriate frame, and $\theta_j$ is the heading angle (yaw). $v_j = [v_{j,1} , v_{j,2} , v_{j,3}]^T \in \mathbb{R}^3$ is the velocity vector.  $R_j (\theta_j)$ is given by
\begin{equation}
    R_j = \begin{bmatrix}
    \cos{\theta_j} & -\sin{\theta_j} & 0\\
    \sin{\theta_j} &  \cos{\theta_j} & 0\\
    0 & 0 & 1
    \end{bmatrix}.
\end{equation}
The model matrices $M_j$ and $D_j$ for all tugboats are same and given by
\begin{equation}
    M_j =  \begin{bmatrix}
    33.8 & 1.0948 & 0 \\
    1.0948 & 2.764 & 0\\
    0 & 0 & 23.8
    \end{bmatrix} \, , \, D_j = \begin{bmatrix}
    7 & 0.1 & 0\\
    0.1 & 0.5 & 0\\
    0 & 0 & 2
    \end{bmatrix}.
\end{equation}
We consider a multi-agent network example with four tugboats that are connected through an all-to-all communication network to execute the state=feedback. The objective is set such that starting from random initial locations and yaw angles, the four tugboats make a square formation at the point $(x^{*}_1,y^{*}_1) = (10,10)$, $(x^{*}_2,y^{*}_2) = (10,-10)$, $(x^{*}_3,y^{*}_3) = (-10,-10)$, and $(x^{*}_4,y^{*}_4) = (-10,10)$. The goal is that at each timepoint during the travel, the boats should try to be at the same distance from the targets. Also, the velocities should asymptotically go to zero as the boats reach their destination. Define $\delta \eta_j = [x_j - x_j^{*} \, , \, y_j - y_j^{*} \,,\,\theta_j]^T $ Mathematically, the objective function is
\begin{multline}
    J = \int_0^{\infty} \sum_{j=1}^4( 5\delta\eta_j^T \delta\eta_j + v_j^Tv_j ) + \\ \underbrace{\sum_{k=1}^4 \big( \sum_{j=k+1}^4 (\delta x_j - \delta x_k)^2 + (\delta y_j - \delta y_k)^2\big)}_{\text{distance objective}}.
    \label{eqn:objfunction}
\end{multline}
The input cost matrix $R$ was chosen to be identity. We consider the Carleman states only for the interactions between yaw angle $\theta_j$ and velocities $v_{j,1} , v_{j,2}$. Therefore, for learning purposes, the matrix $R_j(\theta_j)$ for different values of $\mathbf{N}$ would be 
\begin{equation}
    \underbrace{\begin{bmatrix}
    1 & 0 & 0\\
    0 & 1 & 0\\
    0 & 0 & 1\\
    \end{bmatrix}}_{\mathbf{N}=1}  ,    \underbrace{\begin{bmatrix}
    1 & -\theta_j & 0\\
    \theta_j & 1 & 0\\
    0 & 0 & 1\\
    \end{bmatrix}}_{\mathbf{N}=2} ,     \underbrace{\begin{bmatrix}
    1 - (\theta_j^2/2) & -\theta_j & 0\\
    \theta_j & 1 - (\theta_j^2/2) & 0\\
    0 & 0 & 1\\
    \end{bmatrix}}_{\mathbf{N}=3}.
\end{equation}
Note that, this means that the higher-order Carleman states for $\mathbf{N}=2$ and $\mathbf{N}=3$ would be
$[\theta_jv_{j,1},\theta_jv_{j,2}]^T$ and $[\theta_jv_{j,1},\theta_jv_{j,2},\theta_j^2v_{j,1},\theta_j^2v_{j,2}]^T$ respectively.
\subsubsection{Learning and performance study}
\begin{figure}[h]
     \centering
     \begin{subfigure}[b]{0.49\linewidth}
         \centering
         \includegraphics[width=\textwidth,height=3cm]{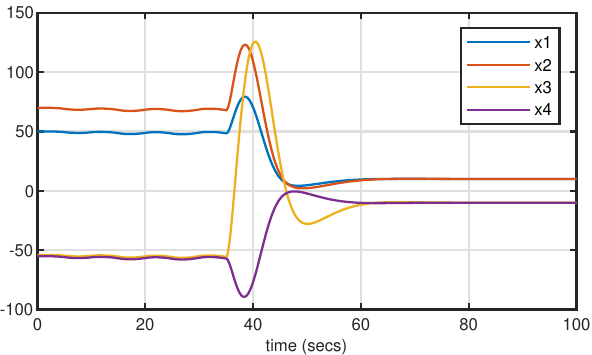}
         \caption{Position $x$}
         \label{subfig:xt}
     \end{subfigure}
     \hfill
     \begin{subfigure}[b]{0.49\linewidth}
         \centering
         \includegraphics[width=\textwidth,height=3cm]{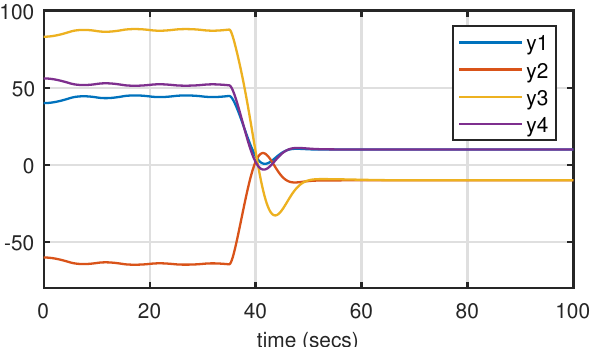}
         \caption{Position $y$}
         \label{subfig:yt}
     \end{subfigure}\hfill
     \begin{subfigure}[b]{0.49\linewidth}
         \centering
         \includegraphics[width=\textwidth,height=3cm]{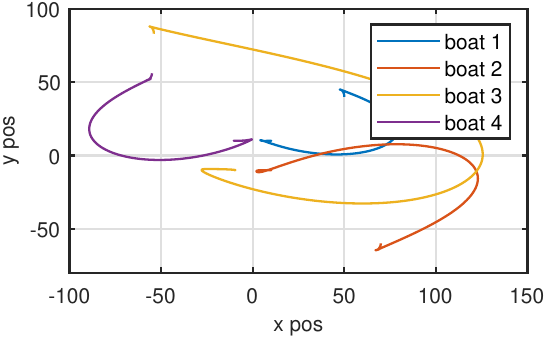}
         \caption{Trajectory $y$ vs $x$}
         \label{subfig:xvsy}
     \end{subfigure}
     \hfill
     \begin{subfigure}[b]{0.49\linewidth}
         \centering
         \includegraphics[width=\textwidth,height=3cm]{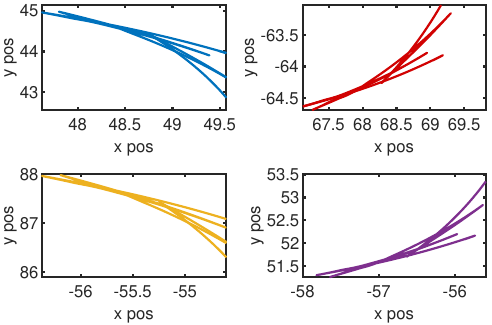}
         \caption{Learning phase}
         \label{subfig:zoomlearn}
     \end{subfigure}
        \caption{Learning and control for  $\mathbf{N}=2$ Carleman controller}
        \label{fig:N2learning}
\end{figure}

Fig. \ref{fig:N2learning} shows the learning and closed-loop performance for Carleman $\mathbf{N}=2$ case. Fig. \ref{subfig:xt}, \ref{subfig:yt} shows the $x$ and $y$ positions for the 4 boats. Fig. \ref{subfig:xvsy} shows the trajectory in the $x-y$ coordinates starting from their initial location to the final square formation. The first 38 seconds are the learning phase, where the boats are input with an alternating pulse of $+1,-1$ every 3 seconds. Fig. \ref{subfig:zoomlearn} shows the trajectory of the 4 boats in the learning phase. The learning is done such that the boats remain close to their initial conditions, and their loci towards the final location begin after the learning phase is completed.

Fig. \ref{fig:boatconvergence} shows the convergence of $P_i$ for different values of cost matrix $Q$. The first case $Q$ is corresponding to the objective function \eqref{eqn:objfunction}, and the rest are as indicated in Fig. \ref{fig:boatconvergence}. We see that for the first 2 cases, the $P_i$ converges within 3 and 5 iteration respectively. However, for the last case, where a very small value of $Q$ is used, the convergence fails. This supports the proof of Lemma 7, where we showed that a large enough $Q$ is needed to guarantee stability, and, therefore, convergence.
\begin{figure}[h]
    \centering
    \includegraphics[width=\linewidth,height=3.6cm]{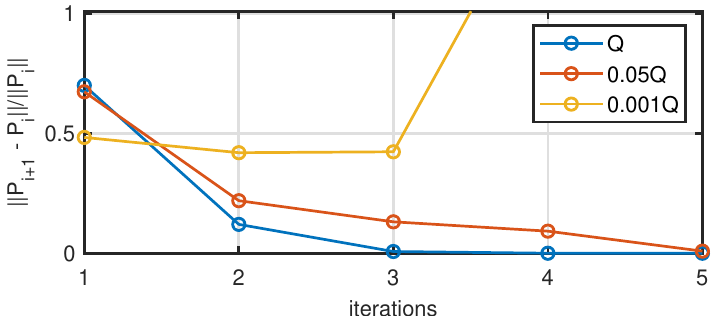}
    \caption{Convergence for varying $Q$}
    \label{fig:boatconvergence}
\end{figure}

\begin{figure*}[h]
     \centering
     \begin{subfigure}[b]{0.49\textwidth}
         \centering
         \includegraphics[width=1\textwidth]{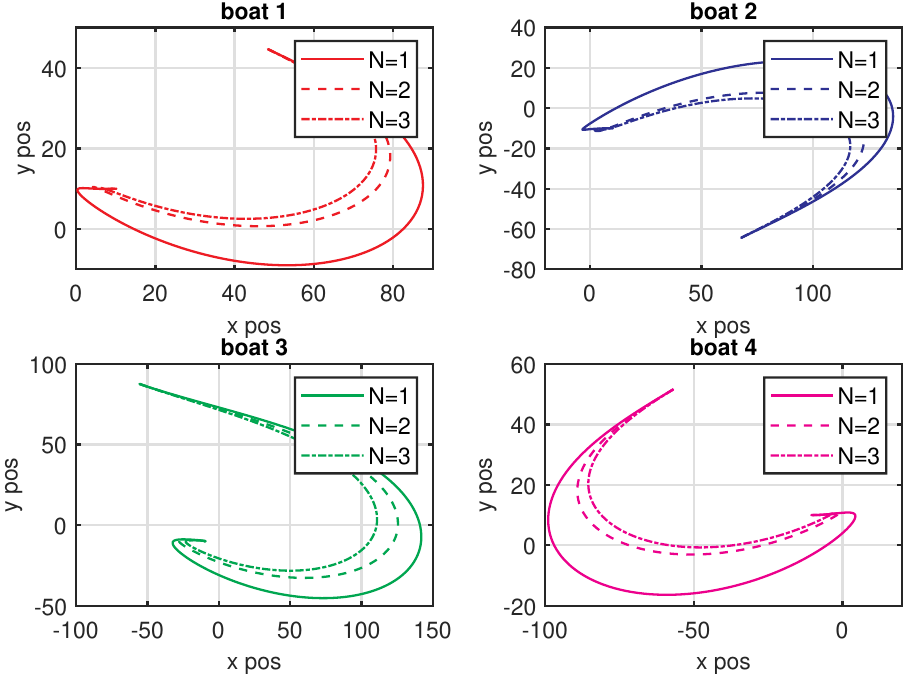}
         \caption{Small deviation in initial yaw angle $\max (\theta_j) = 0.2,\, j = 1,..,4$}
         \label{subfig:smallyaw}
     \end{subfigure}
     \hfill
     \begin{subfigure}[b]{0.49\textwidth}
         \centering
         \includegraphics[width=1\textwidth,height=6.6cm]{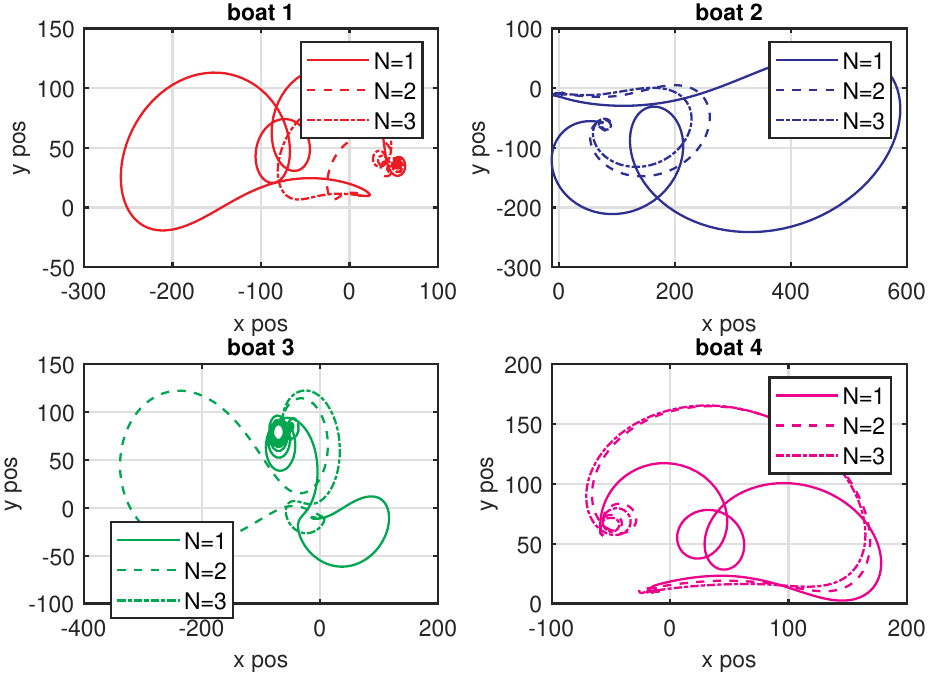}
         \caption{Large deviation in initial yaw angle $\max (\theta_j) = 3,\, j = 1,..,4$}
         \label{subfig:largeyaw}
     \end{subfigure}
        \caption{Comparison of learning based Carleman controllers for $\mathbf{N}=1, 2, 3$ truncations}
        \label{fig:Ncomparison}
\end{figure*}

Fig. \ref{fig:Ncomparison} shows the comparison of the learnt Carleman controller for truncation lengths of $\mathbf{N}=1, 2, 3$ for two different cases. Fig. \ref{subfig:smallyaw} shows the case when the initial yaw angles are small. Fig. \ref{subfig:largeyaw} shows the trajectory for large initial yaw angle.  Table \ref{tab:costcom} compares the cost value for the two different cases.
\begin{table}[h]
    \centering
    \begin{tabular}{|c|c|c||c|c|}
    \hline 
  \multicolumn{3}{|c|}-  &{Small yaw } &{Large yaw }\\
   \hline
        $\mathbf{N}$ & Timesteps & Iterations & $J$ $\times 10^7$ &   $J$ $\times 10^7$ \\
        \hline 
       1  & 300 & 2 & 1.25 & 5.810 \\ \hline
       2  & 528 & 3  & 1.16 &  3.71 \\ \hline 
       3  & 820 & 3 & 1.12 &  3.14 \\ \hline 
    \end{tabular}
    \caption{Learning time and objective value for $\mathbf{N} = 1, 2, 3$}\vspace{-0.3cm}
    \label{tab:costcom}
\end{table}
From Figs. \ref{subfig:smallyaw},\ref{subfig:largeyaw} and Table \ref{tab:costcom} we see that not only does the objective value decrease with increasing $\mathbf{N}$, but for larger initial yaw angle, the nonlinearity dominates in the system, and the trajectories and cost for larger $\mathbf{N}$ show significant improvement as compared to the smaller yaw angle case.

\subsubsection{Structured RL control}
We next illustrate the performance of our proposed structured Carleman controller on the four tugboat model. The desired structure is shown in Fig. \ref{subfig:commstruc} where the bidirectional links between boat 1 and boat 4, boat 2 and boat 4 have been disabled.  Fig. \ref{subfig:xtstruc} shows the trajectories for the dense learnt controller (solid line) and the structured controller (dashed line) for $\mathbf{N}=2$. {\color{black} From Fig. \ref{subfig:xtstruc}, we can see that the closed-loop trajectories post the learning phase do not significantly differ from each other, even though the communication topology has been substantially reduced. However, it is important to note that if more than two links in addition to that shown in Fig. \ref{subfig:commstruc} are disabled, it is found that the proposed algorithm fails to converge to a stabilizing controller, supporting the argument that a controller for the desired topology must exist for the algorithm to work.}

\begin{figure}[h]
     \centering
     \begin{subfigure}[b]{0.49\linewidth}
         \centering
         \includegraphics[width=\linewidth,height=4cm]{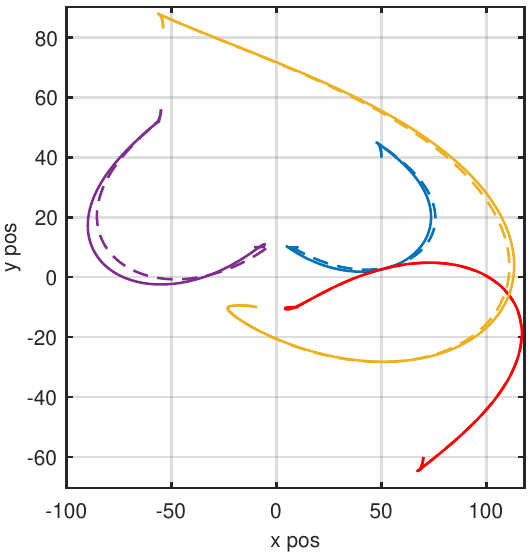}
         \caption{Closed-loop trajectory}
         \label{subfig:xtstruc}
     \end{subfigure}
     \hfill
     \begin{subfigure}[b]{0.49\linewidth}
         \centering
         \includegraphics[width=\textwidth,height=4cm]{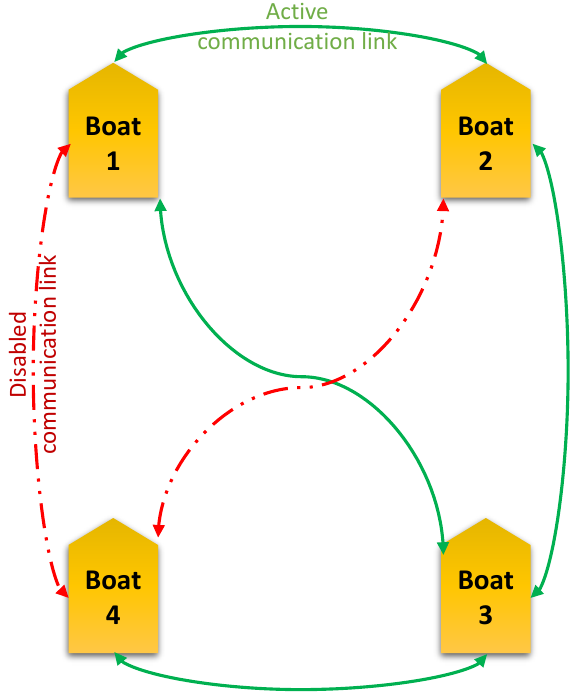}
         \caption{Communication structure}
         \label{subfig:commstruc}
     \end{subfigure}
        \caption{Structured Carleman RL for $\mathbf{N}=2$}
        \label{fig:structure}
\end{figure}

\subsubsection{Sparsity}
We next show the efficacy of the sparsity promoting algorithm \ref{alg:sparse} on our Carleman RL controller. Fig. \ref{fig:sparsity} shows the zoomed-in trajectories for the four tugboats at the final location for different values of $\gamma$ indicating increasing sparsity {\color{black} in the learnt closed-loop $N=2$ control gain. Note that this sparsity in terms of the communication links, but in terms of reducing the information being shared over the existing communication links.} It can be seen that for increasing $\gamma$, the trajectories start significantly deviating from the dense control. For this specific example, for $\gamma > 0.25$, the closed-loop system does not converge to the desired formation. 

\begin{figure}[h]
    \centering
    \includegraphics[width=\linewidth]{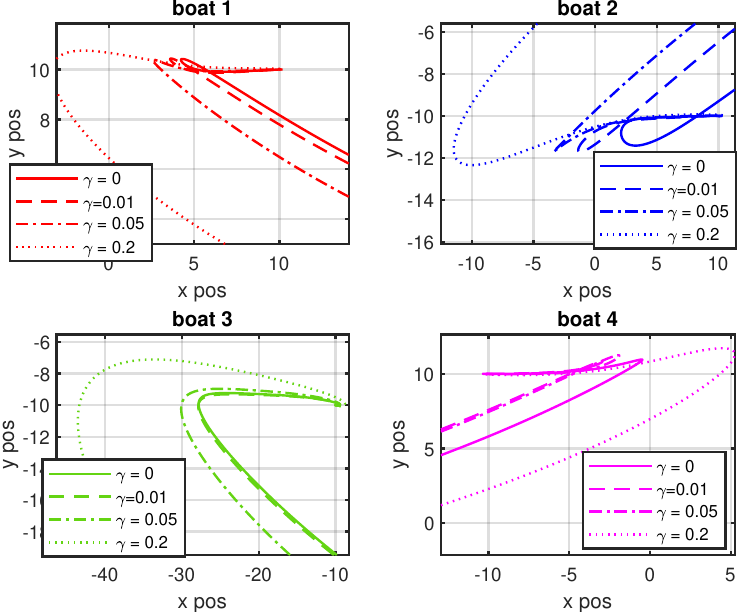}
    \caption{Sparse Carleman controller for varying $\gamma$}
    \label{fig:sparsity}
\end{figure}

Next, Fig. \ref{fig:badnwidth} shows the bandwidth requirements and objective function values for different $\gamma$. The max bandwidth for each bidirectional link is 12. The total system bandwidth is $12 \times 6 = 72$. Note that a state is transmitted if the corresponding entry of the control matrix is non-zero. Even if the entry in $K$ corresponding to the linear state is zero, but the Carleman states is non-zero, then the state needs to be transmitted. Fig. \ref{fig:badnwidth} shows that for an increasing value of $\gamma$, the total bandwidth requirement reduces. The trade-off, however, is in the performance, which can not only be seen in the difference in closed-loop trajectories in Fig. \ref{fig:sparsity}, but also in the increasing values of the objective function $J$ in Fig. \ref{fig:badnwidth}.

\begin{figure}[h]
    \centering
    \includegraphics[width=\linewidth,height=3.64cm]{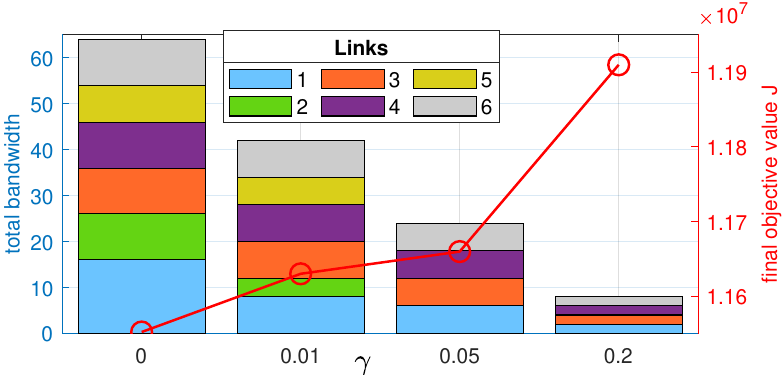}
    \caption{Total bandwidth requirement vs sparsity}
    \label{fig:badnwidth}
\end{figure}

\section{Conclusions and Future work}
We presented a data-driven approach for optimal control of unknown nonlinear systems. We use Carleman linearization to represent the nonlinear system in infinite-dimensional space, and develop policy iteration based methods for closed-loop control with any order of truncation. We also derive two variants of the Carleman RL design where the communication between various nonlinear subsystems in a network can be made structured and sparse, while still guaranteeing closed-loop stability. We compare our method with neural network based methods to show how our approach requires less learning time with no significant impact on closed-loop performance. For future work, we would like to extend this design to generic nonlinear system models that may not be input-affine.




\end{document}